\documentclass[sigconf]{acmart}

\usepackage{bbm}
\usepackage{graphicx}
\usepackage{subfigure}
\usepackage{float}
\usepackage{amsfonts}
\usepackage{enumerate}
\usepackage{algorithm}
\usepackage{algpseudocode}
\usepackage{amsmath}
\usepackage{enumitem}
\usepackage{natbib} 
\usepackage{caption}
\usepackage{hyperref}
\usepackage[marginal]{footmisc}
\usepackage{appendix}
\usepackage{amsmath}



\setcopyright{rightsretained}
\settopmatter{printacmref=false}


\theoremstyle{plain}
\newtheorem{theorem}{Theorem}
\newtheorem{lemma}{Lemma}

\newtheorem{assum}{Assumption}
\newtheorem{assumption}{Assumption}
\newtheorem{definition}{Definition}

\newtheorem{rmk}{Remark}
\renewcommand{\vec}[1]{{\boldsymbol{{#1}}}}
\usepackage{tikz}
\usetikzlibrary{backgrounds}
\usetikzlibrary{fit,positioning}
\tikzstyle{vertex}=[circle, draw, inner sep=1pt, minimum size=15pt]

\begin{document}
\title{Model Free Reinforcement Learning Algorithm for Stationary Mean field Equilibrium for Multiple Types of Agents}
\author{Arnob Ghosh}
\email{arnob.ghosh@imperial.ac.uk}
\affiliation{%
  \institution{Imperial College of London}
  \city{London}
 \country{UK}
}
\author{Vaneet Aggarwal}
\email{vaneet@purdue.edu}
\affiliation{%
  \institution{Purdue University}
  \city{West Lafayette}
  \state{IN}
  \country{USA}
}

	\begin{abstract} 
	We consider a multi-agent Markov strategic interaction over an infinite horizon where agents can be of multiple types.  We model the strategic interaction as a mean-field game in the asymptotic limit when the number of agents of each type becomes infinite. Each agent has a private state; the state  evolves depending on the distribution of the state of the  agents of different types and the action of the agent. Each agent wants to maximize the discounted sum of rewards over the infinite horizon which depends on the state of the agent and the distribution of the state of the leaders and followers. We seek to characterize and compute a stationary  multi-type Mean field equilibrium (MMFE) in the above game. We characterize the conditions under which a stationary MMFE exists.  Finally, we propose Reinforcement learning (RL) based algorithm using policy gradient approach to find the stationary MMFE when the agents are unaware of the dynamics. We, numerically, evaluate how such kind of interaction can model the cyber attacks among defenders and adversaries, and show how RL based algorithm can converge to an equilibrium.
\end{abstract}
	\maketitle
	

	\section{Introduction}
	In many real world applications, such as in cyber-physical system, intelligent transportation, cyber-security, smart grid, and Internet of Things (IoT), strategic agents (e.g., smart devices, robots, rational humans) interact with each other. In most of these applications, the agents interact repeatedly and the `reward' or the utility of an agent not only depends on her own action but also on the actions of other agents. 
	
	Finding the optimal strategy of the agents becomes more challenging when the agents are unaware of the exact environment and reward.  Multi-agent Reinforcement learning (MARL) has been developed to model the uncertainty \cite{hernandez2019survey}. In  MARL, similar to a Markov stochastic game, the environment of an agent is embedded as a state. The state and reward obtained by an agent depends on the current state or action of each individual agent.   However, due to the curse of dimensionality, analyzing MARL with a large number of players is challenging when the agents interact for a longer time. 
	
	Mean-field game (MFG) has recently been popular to study interaction among a large number of {\em statistically similar} agents. MFG studies the game in the asymptotic limit where the number of agents is considered to be infinite. When the number of agents becomes large, only the distribution of the states (or, actions) of all the agents rather than the individual state (or, action) along with her own state and action only impacts the state and reward of an individual agent in a MFG. Thus, in a MFG, each agent only needs to consider the distribution of the agents' states (or, actions) rather than individual state or action of the other agents. MFG along with the Reinforcement learning setting has been of recent interest to analyze strategic interactions among a large number of agents in an unknown environment.
	
	In the traditional MFG, all the players are statistically similar. However, in a lot of scenario players can be of different types. In cyber-physical system, agents are often of different types.  Further, leader-follower type of interaction where  a set of agents (leaders) first take their actions and the other set of agents (followers) follow in a MFG setting  is ubiquitous.   For example, in cyber-security, defenders protect the system against the adversaries. The defenders try to protect the system by installing firewalls, the adversaries then try to breach those systems which are not well protected. Further, 
	
	Analyzing those games in a MFG setting is challenging.  The agents are {\em not statistically} similar any more since there are now two different classes of agents: leaders and followers. In a MFG, mean-field equilibrium (MFE) policy is defined as an equilibrium policy where no agent has any incentive to deviate from its own strategy. However, when agents are of different types the equilibrium needs to be optimal In a leader-follower game with large number of agents, we need to compute a MFE for followers for any policy of a leader, then we need to determine the MFE policy for the leaders which must also consider the MFE policy for the followers. The analysis becomes more challenging when the agents are unaware of the reward functions and the transition probability kernels. 
	
	\subsection{ Contribution}
	We, first, develop theoretical tools for modeling and analyzing a simultaneous Markov MFG (Section~\ref{sec:mfg}) where agents of different types interact simultaneously over a long period of time.  We propose a Multi-type Mean-field-equilibrium (MMFE) concept in the MFG setting (Section~\ref{sec:mfe}). In a MMFE, there are two policies--one for each type of agents. In a MMFE, the policy at a given state of an agent must be optimal given the actions of the agents of the same type as well as the actions of the agents of different types. Further, the evolution of the distribution of the states of both the types of agents must be consistent with the policies. 
	

	We consider a stationary MMFE where the population distribution of the agents, and the policies of the agents do not change over time (Section~\ref{sec:stationary}). Thus, in the stationary MMFE, the population distribution at the next time must remain the same even when the agents of each type take their optimal actions based on individual state.    We propose an algorithm which converges to the stationary MMFE when the leaders and followers know the dynamics of the Markov Decision Process (Section~\ref{sec:compute}). In the algorithm, first for a given population distribution of both the types, one finds the optimal policies for agents of each type. The population distribution is updated based on the optimal policies. The steps are continued till the population distributions become the same. We characterize the condition under which a fixed point exists.  
	
	We, subsequently, provide an adaptation of the algorithm for the case where  the reward and transition probability kernels are unknown to the agents using the RL approach (Section~\ref{sec:rl}). Specifically, a policy gradient based algorithm is used to compute an optimal policy for the agents for a given population distribution. We characterize the conditions on the policy parameters such that the algorithm converges. Numerically, we model the interaction between defenders and attackers in a cyber security space as a Multi-agent Markov game and shows that our RL based algorithm converges to the MMFE (Section~\ref{sec:numerical}). 
	
	
	\subsection{Related Literature}
	Multi-agent interaction using a Markov game has been studied \cite{hu,littman,Bua,tan}. Recently, multi-agent reinforcement learning using mean field game has also been studied  \cite{yang,guo}. In \cite{guo1}, the authors characterized the conditions under which a stationary MFE exists and proposed algorithm to obtain that.  In \cite{mahajan}, authors proposed a policy gradient mechanism to compute a local stationary MFE. In \cite{carmona2019modelfree}, authors proposed a model free reinforcement learning algorithm and showed that the algorithm converges to a MFE under some regularity conditions. In \cite{carmona2019linearquadratic}, the authors showed that a policy gradient mechanism for computing MFE in a linear quadratic controller. 
	
	In \cite{fu2019actor}, an actor-critic method has been proposed to determine MFE.  \cite{mguni2018decentralised} shows that MFG is a potential game for a strictly monotone game. \cite{perrin2020fictitious} shows that under certain conditions, fictitious play converges to a MFE. \cite{yang2017learning} proposed a deep mean field game to model strategic interactions among the players. \cite{mishra2020modelfree} proposed a model-free non stationary algorithm to compute MFE. Compared to all the above literature,we focus on leader-follower kind of interaction in a MFE where both the leaders and followers are  different. Further, we proposed a policy-gradient type algorithm which is faster in convergence compared to $Q$-learning method adopted in \cite{guo,guo1,carmona2019modelfree}. Compared to \cite{mahajan}, we characterize the conditions under which as stationary MMFE exists which are easy to verify. Further, we propose a novel Random-Horizon policy gradient mechanism for obtaining MMFE. 
	
	Leader-follower type game in  a game theoretic setting is extensively studied \cite{moore1988subgame,harris1995existence}. Hierarchical Markov game has also been studied. \cite{10.5555/2891460.2891644,4220822} considered a tractable leader-follower MDP game. \cite{sengupta2020multiagent} considered a Stackelberg Markov game and proposed a Stackelberg $Q$-learning algorithm to compute  $Q$-values for leaders and followers. However, all the above papers considered only one leader. Compared to these existing literature, we consider a MFG where agents can be of multiple types. We consider MMFE as an equilibrium concept and propose a Reinforcement learning algorithm to compute the equilibrium where the agents are unaware of the dynamics. 

\section{Simultaneous Multi-agent Game}
We, first, define a Markov game where agents of different types. Subsequently, we define the mean field game equilibrium for  game when the agents of each type becomes infinite. 
\subsection{Background: Multi-types Multi-agent Markov Game}
We consider a classical Markov game with $N_j$ agents of type $j$.  In Section~\ref{sec:discussion}, we discuss how to extend the model when there are more than two types. At each time $t$, the state of an agent $i$ of type $j$ is $x_{j,t}^i\in \mathcal{X}_j$ and she takes an action $a_{j,t}^i\in \mathcal{A}_j$ for $i=1,\ldots,N_j$. 

At time step $t$, the agent $i$ of type $j$ gets a reward based on her own state and action, the other agents' states of type $j$, and the states of agents of type $k\neq j$. We denote the reward of any agent $i$ of type $j$ as $r^j(\mathbf{x}_{1,t},\mathbf{x}_{2,t},a_{j,t}^i)$. Here, $\mathbf{x}_{j,t}$ is the state vector of all the agents of type $j$. All the agents take actions simultaneously.\footnote{It is equivalent to the setting where the agents do not observe the actions of the other agents.}    



The state of agent $i$ of type $j$ at time $t+1$ evolves depending on the transition probability kernel $\tau^j(\cdot|\mathbf{x}_{1,t},\mathbf{x}_{2,t},a_{j,t}^i)$. The transition probability of agent $i$ of the type $j$ depends on the state vector of all the agents of both the types and action of the agent $i$, $a_{j,t}^i$.  Note that inherently, the transition probability depends on the state of the agent $i$ which is included in the joint state space of the agents.  
\begin{rmk}\label{rmk:state}
The reward and the transition probabilities may also depend on the joint actions of agents of different types.  We can extend our analysis to the above scenario  and we will discuss it in Section~\ref{sec:discussion}. 
\end{rmk}

A Markovian game restricts the admissible policies for a player to be of markovian in nature. The policy for an agent $i$ of type $j$ is given by $\pi^{i,j}_t:\mathcal{X}_1^{N_1}\times \mathcal{X}_2^{N_2}\rightarrow \mathcal{P}(\mathcal{A}_j)$, where $\mathcal{P}(\mathcal{A}_j)$ is the probability space of the action space of agents of type $j$. Thus, the policy of an agent depicts the action which the agent takes given the state vector of all the agents. Note that the policy can be randomized where an agent can randomize over multiple actions. A policy can also be deterministic where the agent takes a certain action for a specific joint state distribution of leaders and followers. 


The accumulated reward for an agent $i$ (a.k.a. value function) of type $j$, starting from an initial state vector $\vec{x}_{0,1}, \vec{x}_{0,2}$  and the policy sequence $(\vec{\pi}^{1},\vec{\pi}^2)=\{\vec{\pi}^1_t, \vec{\pi}^2_t\}^{\infty}_{t=0}$ with $\vec{\pi}^1_t=\{\pi_t^{1,1},\ldots,\pi_t^{N_1,1}\}$ and $\vec{\pi}^2_t=\{\pi_t^{1,2},\ldots,\pi_t^{N_2,2}\}$ is given by
\begin{equation}
    V^j_i(\vec{x}_1,\vec{x}_2,\vec{\pi}^1,\vec{\pi}^2)=\mathbb{E}[\sum_{t=0}^{\infty} \gamma^tr^j(\vec{x}_{1,t},\vec{x}_{2,t},a_{j,t}^i)|\vec{x}_{0,1}=\vec{x},\vec{y}_{0,1}=\vec{y}]\nonumber
\end{equation}
where $0<\gamma<1$ is a discount factor, $a_{j,t}^i\sim\pi^{i,1}_t(\vec{x}_{t,1},\vec{x}_{t,2})$ and $x^i_{j,t+1}\sim \tau^j(\vec{x}_{t,1},\vec{x}_{t,2},a_t^i), \vec{x}_j=\{\vec{x}_{j,t}\}^{\infty}_{t=0}$. Each agent $i$ of type $j$ would want to maximize her own value function over the policies $\pi^{i,j}$.  The expectation here is both on the action profile of all the agents and the transition probabilities based on the action profiles of agents.

\subsection{Mean Field Game for different types of agents}\label{sec:mfg}
In general, Markovian game where the agents are of single type is difficult to analyze for finite number of players. Mean-field-game (MFG), pioneered by \cite{huang2006} and \cite{lasry2007mean} in the continuous setting and letter developed for the discrete setting \cite{BENAIM2008823,lapez, basar}, provides a tractable approach for analyzing the finite player Markovian game. The idea of MFG is simple: Assume that all the players are identical, interchangeable, and indistinguishable from one other, when the number of player $N\rightarrow \infty$, we can view the limit of player states as a population state distribution. 

In the  MFG of multiple types of agents, similar to the MFG where agents are of single type, we consider that both $N_1\rightarrow \infty$ and $N_2\rightarrow \infty$. The agents of type $j$ are identical and indistinguishable from each other. {\em However, the agents of types $1$ and $2$ are different.} For each agent's  perspective, the distribution of the states of all the other agents of its own type, states of the agents of other type, and her own state and action would impact the states and rewards.

We define the limit of the state distribution of the agents of type $j$ as $z_{j,t}$
\begin{equation}
z_{j,t}(s)=\lim_{N_j\rightarrow \infty}\dfrac{\sum_{i=1}^{N_j}\mathbbm{1}_{x_{j,t}^{i}=s}}{N}
\end{equation}
where $\mathbbm{1}_{x_{j,t}^{i}=s}$ indicates that the state of agents $i$ of type $j$ at time $t$ is $s$ (the value is $1$ if the state of agent $i$ of type $j$ at time $t$ is $s$ and is zero otherwise). This is also known as {\em type $j$'s population distribution.}


The transition probability kernel of an agent $i$ of type $j$ when she is in state $x_{t,j}$ and takes an action $a_{t,j}$ is now represented as $\tau^j(\cdot|x_{j,t},z_{1,t},z_{2,t},a_{j,t})$.    The reward function of agent $i$ of type $j$ when she is in state $x_{j,t}$ and takes action $a_{j,t}$, is represented by $r^j(x_{j,t},a_{j,t},z_{1,t},z_{2,t})$. Hence, the transition probability and the reward both depend on the current state, action and the population distribution of the agents of both the types.  Note that the reward and the transition probabilities are identical for each agent of type $j$. However, the reward and the transition probabilities are different for agents of different types. Note that the agents of similar types may have greater influence on the agent compared to the agents of different type. Hence, the transition probability and rewards of agents of type $j$ may be influenced by $z_{j,t}$ compared to $z_{k,t}$ for $k\neq j$. 

The policy of an agent of type $j$ is now a function from state $x_{j,t}$, the population distribution of type $j$'s agents  $z_{j,t}\in \mathcal{Z}_j$, the population distribution of the type $k$s agents $z_{k,t}\in \mathcal{Z}_k$ $k\neq j$ to the probability space over the actions. It is a mapping from $\pi^j_t:\mathcal{Z}_1\times\mathcal{Z}_2\times \mathcal{X}_j\rightarrow P(\mathcal{A}_j)$

\begin{rmk}
We can represent $z_t=\{z_{1,t},z_{2,t}\}$ as the population state distribution of both types of agents.
\end{rmk}


Due to the homogeneity among the agents of particular type, we consider a representative agent who wants to maximize the value function starting from an initial state $x$:
\begin{align}
    V^j(x,\vec{\pi}^1,\vec{\pi}^2,\vec{z}_{1},\vec{z}_{2})=\mathbb{E}[\sum_{t=0}^{\infty} \gamma^tr^j(x_{j,t},z_{1,t},z_{2,t},a_{j,t})|x_{j,0}=x],
\end{align}
where $\vec{z}_j$ is the population distribution sequences of type $j$  i.e., $\vec{z}_j=\{z_{j,t}\}^{\infty}_{t=0}$. Here the expectation is over both the transition kernel, and the policies $\vec{\pi}^1$ and $\vec{\pi}^2$. 

   
\subsection{Multi-agent  Mean Field Equilibrium}\label{sec:mfe}
Now,  we define the multi-agent mean field equilibrium (MMFE).
\begin{definition}
A player-population profile $(\vec{\pi}^1,\vec{\pi}^2,\vec{z}_{1},\vec{z}_2)$ is called a MMFE if the following hold:
\begin{itemize}
    \item MFE for each agent of type $j$, $J=1,2$: The policy must be MFE for the representative agent of type $j$, i.e.,  it must be optimal for a given reaction policy of agents of other type $k$, $K\neq j$, $\vec{\pi}^k$, i.e., for all state $x$ and $\forall \vec{\pi}^{j,\prime}$
    \begin{align}
        V^j(x,\vec{\pi}^j,\vec{\pi}^k,\vec{z}_{1},\vec{z}_2)\geq V^j(x,\vec{\pi}^{j,\prime},\vec{\pi}^k,\vec{z}_{1},\vec{z}_2)
    \end{align}
    \item Population Consistency: We must have for all $t\geq 0$ and for $j=1,2$
    \begin{eqnarray}
        z_{j,t+1}(\tilde{x})=\sum_{x_j,a_j}\tau^j(\tilde{x}|x_j,a_j,z_{1,t},z_{2,t})
    \pi^j_t(a_j|x_j,z_{1,t},z_{2,t})z_{j,t}(x)\label{eq:pop_consist}
    \end{eqnarray}
\end{itemize}
where $z_{1,0}$ and $z_{2,0}$are initial population distributions of type $1$ and $2$ respectively.
\end{definition}
The first part of the definition specifies that $\pi^j_t$ must be optimal for agents of type $j$ for any given state given the policy of other agents of other type $k\neq j$ for each type $j$.  The population consistency guarantees that the state distribution of the type $j$ $j=1,2$ is consistent with the state distribution evolution of the agents of  type $j$ given the state distributions and policies of the agents for other type $k\neq j$. This part is unique for a MFG setting. 

We represent (\ref{eq:pop_consist})  as
\begin{align}\label{eq:mean}
z_{j,t+1}=\phi^j(\pi^j_t,z_{1,t},z_{2,t})
\end{align}
where $\phi^j$ depends on the transition probability $\tau^j$, the policy $\pi^j$ and $z_{j,t}$ as evident from (\ref{eq:pop_consist}).


\begin{rmk}\label{rm:act-mfe}
If we consider that the reward and transition probability also depend on the action distributions of the agents, the action distribution also needs to be consistent with the MFE policy. Hence, we would need  additional conditions, given as follows for $j=1,2$:
\begin{align}{eq:action}
    \nu_{j,t+1}(a_j)=\sum_{x_j}z_{j,t+1}(x_j)\pi^j_{t+1}(a_j|x_j,z_{1,t+1},z_{2,t+1})
\end{align}
Note that here the action distribution implicitly depends on the action of the agents of type $k$ $k\neq j$, since the transition probability and reward both becomes functions of the action distributions of the agents of type $k$ as well. 
\end{rmk}
\subsection{Stationary MMFE}\label{sec:stationary}
In a stationary MMFE, $z_{j,t}=z_{j}$,  and $\pi^j_t=\pi^j$ $\forall t$, for $j=1,2$. Thus,  for a stationary MMFE, we have,
\begin{definition}
A player-population profile $(\pi^1,\pi^2,z_{1},z_{2})$ is called a  stationary MMFE if the following hold:
\begin{itemize}
    \item MFE for agents of type $j$: The stationary policy must be MFE for the representative agent of type $j$,  i.e., for all state $x$ and $\forall \pi^{j,\prime}$, for $j=1,2$, and $k\in \{1,2\}, k \neq j$
    \begin{equation}
        V^j(x_j,\pi^j,\pi^k,z_{1},z_2)\geq V^j(x_j,\pi^{j,\prime},\pi^k,z_{1},z_2)
    \end{equation}
    \item Population Consistency: We must have for all $ t$ and $j=1,2$
    \begin{equation}
        z_{j}=\phi^j(\pi^j, z_{1}, z_{2}) 
        \label{eq:state_consist}
       \end{equation}
\end{itemize}
\end{definition}
In general, computing a non-stationary MMFE is challenging since one needs to find a fixed point at every time step by considering the sequences of population distributions. Thus, in non-stationary MMFE, a finite horizon is generally used. Stationary MMFE is computationally simple. Thus, in the following we focus on the conditions under which a MMFE exists and how to compute a stationary MMFE.

\subsection{Bellman Equation for a stationary MMFE}
In this section, we represent the Bellman equation corresponding to the value function which we use in our proposed Algorithm and the corresponding analysis in Section~\ref{sec:rl}. 

%

Note that in the stationary MMFE, since the population distribution is constant, thus, the action policy of the agents of the other type do not impact the value function in the equilibrium. Thus, we omit $\pi^k$ from the value function for type $j$.  Also even though at the equilibrium, the value function of an agent of type $j$ is independent of $\pi^k$, the equilibrium policy of agent $j$ will depend on the equilibrium policy for agents of type $k$ since the equilibrium policies induce the population distributions $z_k$. The Bellman equation for an agent of type $j$, $j=1,2$ at the equilibrium can be computed as
\begin{eqnarray}
     V^{j}(x_j,\pi^j,z_{1},z_{2})
     =&\sum_{a_j}[\pi^j(a_j|x_j,z_{1},z_{2})(r^j(x_j,z_{1},z_{2},a_j)\nonumber\\
     &+\gamma(\sum_{x^{\prime}}\tau^j(x^{\prime}|x_j,a_j,z_{1},z_{2})\nonumber\\
     &\mathbb{E}[\sum_{k=1}^{\infty} \gamma^{k-1}r^j(x_{j,k},z_{1},z_{2},a_{j,k})|x_{j,1}=x^{\prime}])])\nonumber\\
     =&\sum_{a_j}\left[\pi^j(a_j|x_j,z_{l},z_{f})(r^l(x,z_{l},z_{f},a_j)\right.\nonumber\\
     &+\left.\gamma\mathbb{E}\left[ V^{l}(x^{\prime},\pi^j,z_{1},z_{2})\right])\right]\label{eq:bellman}
\end{eqnarray}
In the first expression of the right hand side, the expectation is over the policy $\pi^j$. In, the last expression, the expectation is over both the transition probability kernel for agents of type $j$ and the policy $\pi^j$ of the agents of type $j$.

\begin{rmk}\label{rmk:act_bellman}
When the transition probability and reward functions of an agent of type $j$ also depend on the action distribution of the agents of type $k,k\neq j$, the Bellman equation can also be written similarly where the value function would be a function of  additional terms $\nu_{1},\nu_{2}$ and $\pi^k$.
\end{rmk}




We define the $Q$ function for a given policy $\pi^j$ of the representative agent of type $j$ as the following
\begin{align}\label{eq:qleader}
    & Q^j(x_j,a_j,\pi^j,z_{1},z_{2})=r^j(x_j,z_{1},z_{2},a_j)+ \gamma\mathbb{E}[  V^{l}(x^{\prime},\pi^j,z_{1},z_{2})],
\end{align}
where the last expectation is over transition probability kernel and the policy for agents of type $j$. 

 
 Note from (\ref{eq:bellman}) and (\ref{eq:qleader}) that $V^lj$ is the expectation of $Q^j$ under $\pi^j$. 
\section{How to compute stationary MMFE?}\label{sec:compute}
In this section, we provide a methodology to compute the MMFE. We also characterize the conditions under which a MMFE exists. Here, the agents are completely aware of the reward functions and the transition dynamics. 


Step 1. Fix $z_{1},z_{2}$. When $z_{1},z_{2}$ are fixed, finding optimal policies become a classical optimization problem. 
Further, for a given $z_{1},z_2$, one can define a mapping from the population distribution to an optimal randomized policy $\pi^j$ among all the admissible policies $\Pi^j$ for an agent of type $j$.  Let us define this mapping be 
\begin{align}
\Gamma^j: \mathcal{P}(\mathcal{X}_1\times \mathcal{X}_2)\rightarrow \Pi
\end{align}
Here, the mapping is from the joint state spaces of both the types of the agents to the policy space. 

We assume the following
\begin{assum}\label{assum1}
There exist a constant $d_1$ such that for any $z_1^{\prime},z_2^{\prime}$ 
\begin{align}
D(\Gamma^1(z_1,z_2),\Gamma^1(z_1^{\prime},z_2^{\prime}))\leq d_1W_1((z_1,z_2),(z_1^{\prime},z_2^{\prime}))\nonumber\\
D(\Gamma^2(z_1,z_2),\Gamma^2(z_1^{\prime},z_2^{\prime}))\leq d_1W_1((z_1,z_2),(z_1^{\prime},z_2^{\prime}))
\end{align}
where 
\begin{align}
D(\pi,\pi^{\prime})=\sup_{y}W_1(\pi(y),\pi^{\prime}(y))
\end{align}
and $W_1$ is the $l_1$-Wasserstein distance between two probability measures.
\end{assum}

Step 2. Based on the analysis on Step 1, update the sequence $z_{j}$ to $z_j^{\prime}$ for $j=1,2$ according to population dynamics in (\ref{eq:mean}). 

Accordingly, for any admissible policy of an agent of type $j$ $\pi^j$ and the joint population distribution $(z_1,z_2)$, define the mapping $\Gamma_2:\Pi^1\times \Pi^2\times P(\mathcal{X}_1\times\mathcal{X}_2)\rightarrow P(\mathcal{X}_1\times \mathcal{X}_2)$. Basically, $\Gamma_2$ operates on the optimal policy of the agents of both the types for a given population distribution and outputs a population distribution according to (\ref{eq:mean}) based on the policies of the agents of both the types and the current population distribution.

We assume the following
\begin{assum}\label{assum2}
There exist constants $d_2,d_3\geq 0$, such that for any admissible policies $\pi^j, \pi^{j,\prime}$  and population distributions $z_1,z_1^{\prime},z_2,z_2^{\prime}$, 
\begin{eqnarray}
& W_1(\Gamma_2(\pi^{1,\prime},\pi^{2,\prime},(z_1,z_2)),\Gamma_2(\pi^1,\pi^2,(z_1,z_2)))\leq\nonumber\\ & d_2\max(D(\pi^{1,\prime},\pi^1),D(\pi^{2,\prime},\pi^2))\nonumber\\
& W_1(\Gamma_2(\pi^1,\pi^2,(z_1,z_2)),\Gamma_2(\pi^1,\pi^2,(z_{1}^{\prime},z_2^{\prime})))\leq\nonumber\\ & d_3W_1((z_1,z_2),(z_1^{\prime},z_2^{\prime}))\nonumber
\end{eqnarray}
\end{assum}

Step 3. Repeat Steps 1 and 2 until both $z_1=z_1^{\prime}$ and $z_2=z_2^{\prime}$. 

The following theorem shows that the prescribed methodology indeed converges to stationary MMFE.
\begin{theorem}
Given Assumptions~\ref{assum1} and \ref{assum2}, and assume $d_1d_2+d_3< 1$, then there exists a stationary MMFE. 
\end{theorem}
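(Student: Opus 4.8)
The plan is to construct a single map on the joint population distribution space whose fixed point is exactly a stationary MMFE, and then invoke the Banach fixed point theorem. The natural candidate is the composition
\begin{align}\label{eq:Phi}
\Phi(z_1,z_2) = \Gamma_2\bigl(\Gamma^1(z_1,z_2),\Gamma^2(z_1,z_2),(z_1,z_2)\bigr),
\end{align}
which first extracts the optimal policies of both types for the fixed pair $(z_1,z_2)$ via the maps $\Gamma^1,\Gamma^2$ of Step 1, and then pushes the population forward one step according to \eqref{eq:mean} via $\Gamma_2$ of Step 2. A point $(z_1,z_2)$ satisfying $\Phi(z_1,z_2)=(z_1,z_2)$ is precisely a joint distribution that is invariant under the one-step dynamics driven by the associated optimal policies, which is what the population-consistency condition \eqref{eq:state_consist} demands, while the MFE optimality condition is guaranteed by construction because $\Gamma^j$ returns an optimal policy for the given population.

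First I would verify that $\Phi$ maps the space $\mathcal{P}(\mathcal{X}_1\times\mathcal{X}_2)$ into itself, which is immediate since $\Gamma_2$ outputs an element of $P(\mathcal{X}_1\times\mathcal{X}_2)$ by definition. The space is complete under the $W_1$ metric because $\mathcal{X}_1$ and $\mathcal{X}_2$ are (compact) state spaces, so $\mathcal{P}(\mathcal{X}_1\times\mathcal{X}_2)$ equipped with $W_1$ is a complete metric space. Next I would establish that $\Phi$ is a contraction. Fixing two pairs $(z_1,z_2)$ and $(z_1',z_2')$, I would bound $W_1\bigl(\Phi(z_1,z_2),\Phi(z_1',z_2')\bigr)$ by inserting the intermediate term $\Gamma_2\bigl(\Gamma^1(z_1',z_2'),\Gamma^2(z_1',z_2'),(z_1,z_2)\bigr)$ and applying the triangle inequality. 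The first piece is controlled by the policy-sensitivity bound in Assumption~\ref{assum2}, giving a factor $d_2\max\bigl(D(\Gamma^1(z_1,z_2),\Gamma^1(z_1',z_2')),D(\Gamma^2(z_1,z_2),\Gamma^2(z_1',z_2'))\bigr)$, which by Assumption~\ref{assum1} is at most $d_1 d_2\,W_1((z_1,z_2),(z_1',z_2'))$. The second piece is controlled by the population-sensitivity bound in Assumption~\ref{assum2}, giving a factor $d_3\,W_1((z_1,z_2),(z_1',z_2'))$.

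Combining the two pieces yields
\begin{align}\label{eq:contraction}
W_1\bigl(\Phi(z_1,z_2),\Phi(z_1',z_2')\bigr) \le (d_1 d_2 + d_3)\,W_1\bigl((z_1,z_2),(z_1',z_2')\bigr),
\end{align}
so the hypothesis $d_1 d_2 + d_3 < 1$ makes $\Phi$ a contraction. The Banach fixed point theorem then gives a unique fixed point $(z_1^\ast,z_2^\ast)$, and setting $\pi^{j,\ast}=\Gamma^j(z_1^\ast,z_2^\ast)$ produces a stationary MMFE: the population-consistency condition holds because $(z_1^\ast,z_2^\ast)$ is a fixed point, and the MFE optimality condition holds because each $\pi^{j,\ast}$ is, by the definition of $\Gamma^j$, an optimal policy against the fixed populations.

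\textbf{The main obstacle} I expect lies not in the fixed-point mechanics, which are routine once the contraction is in hand, but in making the triangle-inequality decomposition rigorous given the way $\Gamma_2$ depends on \emph{two} policy arguments simultaneously while Assumption~\ref{assum2} as stated perturbs the policies jointly and the population separately. One must be careful that the first bound in Assumption~\ref{assum2} is applied with both policy slots changing at once and that the $\max$ over the two types is precisely what appears, so that the subsequent substitution of Assumption~\ref{assum1} (which bounds each $D(\Gamma^j,\cdot)$ by the \emph{same} constant $d_1$ times the \emph{same} $W_1$ distance) collapses the $\max$ cleanly into a single $d_1 d_2$ factor. A secondary subtlety is ensuring that $\Gamma^j$ is genuinely well-defined as a single-valued map — i.e. that for each fixed $(z_1,z_2)$ the optimal policy selection is unique or at least chosen measurably — since otherwise Assumption~\ref{assum1} and the contraction argument would need to be stated for a selection rather than a function.
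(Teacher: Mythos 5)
Your proposal is correct and follows essentially the same route as the paper: the composite map $\Gamma_2(\Gamma^1(\cdot),\Gamma^2(\cdot),\cdot)$ on the joint population space, the same triangle-inequality split through the intermediate term with policies evaluated at $(z_1',z_2')$ but population at $(z_1,z_2)$, the same $d_1d_2 + d_3$ contraction constant, and Banach's fixed point theorem. Your added remarks on completeness of $(\mathcal{P}(\mathcal{X}_1\times\mathcal{X}_2),W_1)$ and on $\Gamma^j$ being a single-valued selection are points the paper leaves implicit, but they do not change the argument.
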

\begin{proof}
By the definition of stationary MMFE, $(\pi^1,\pi^2,z_1,z_2)$ is a stationary MMFE, if and only if $(z_1,z_2)=\Gamma(z_1,z_2)$ \\   $=\Gamma_2(\Gamma^1(z_1,z_2),\Gamma^2(z_1,z_2), z_1,z_2)$ and $\pi^j=\Gamma^j(z_1,z_2)$ where $\Gamma(z_1,z_2)=\Gamma(\Gamma^1(z_1,z_2),\Gamma^2(z_l,z_f))$.

This indicates for any two population distributions $z_1^{\prime},z_1,z_2^{\prime},z_2$
\begin{align}
& W_1(\Gamma(z_1,z_2),\Gamma(z_1^{\prime},z_2^{\prime}))\nonumber\\
& = W_1(\Gamma_2(\Gamma^1(z_1,z_2),\Gamma^2(z_1,z_2), z_1,z_2),\Gamma_2(\Gamma^1(z_1^{\prime},z_2^{\prime}),\Gamma^2(z_1^{\prime},z_2^{\prime}), z_1^{\prime},z_2^{\prime})\nonumber\\
& \leq W_1(\Gamma_2(\Gamma^1(z_1,z_2),\Gamma^2(z_1,z_2), z_1,z_2), \Gamma_2(\Gamma^1(z_1^{\prime},z_2^{\prime}),\Gamma^2(z_1^{\prime},z_2^{\prime}),z_1,z_2))\nonumber\\
& +W_1(\Gamma_2(\Gamma^1(z_1^{\prime},z_2^{\prime}),\Gamma^2(z_1^{\prime},z_2^{\prime}),z_1,z_2),\Gamma_2(\Gamma^1(z_1^{\prime},z_2^{\prime}),\Gamma^2(z_1^{\prime},z_2^{\prime}),z_1^{\prime},z_2^{\prime}))\nonumber\\
& \leq d_2d_1W_1((z_1,z_2),(z_1^{\prime},z_2^{\prime}))+d_3W_1((z_1,z_2),(z_{1}^{\prime},z_2^{\prime}))
\end{align}
Since $0<d_1d_2+d_3<1$, by Banach's fixed point theorem, there exists a stationary MMFE. 
\end{proof}


\begin{rmk}
The Assumption 1 can be represented in a more explicit form for certain type of reward function (such as quadratic in action, action space is convex) similar to \cite{2019fitted} which considers only one type of agents. When the action space is finite Assumption 2 can written in the following lemma.
\end{rmk}
\begin{lemma}
Suppose that $\max_{j}\tau^j(x^{j,\prime}|x_j,a_j,z_1,z_2)\leq c_1$ for all $z_1,z_2$, $x_j,a_j,x^{j,\prime}$, and $\tau^j$, $j=1,2$ are $c_2$-Lipschitz in $W_1$, i.e., $\forall x_j,x^{\prime}_j,a_j$
\begin{align}
    |\tau^j(x^{\prime}_j|x_j,a_j,z_1,z_2)-\tau^l(x^{\prime}_j|x_j,a_j,z^{\prime}_{1},z_2^{\prime})|\leq c_2W_1((z_1,z_2),(z_1^{\prime},z_2^{\prime}))\nonumber
\end{align}
then in Assumption~\ref{assum2}, $d_2$ and $d_3$ are
\begin{align}
    d_2=\dfrac{\max_j diam(\mathcal{X}_j)\max_j|\mathcal{X}_j|c_1}{\min_jd_{min}(\mathcal{A}_j)}\nonumber\\
    d_3=\dfrac{\max_j diam(\mathcal{X}_j)c_2}{2}
\end{align}
where $d_{min}(\mathcal{A}_j)=\min_{a_j\neq a_j^{\prime}}||a_j-a_j^{\prime}||$,  $diam(\mathcal{X}_j)=\max_{x_j\neq x^{\prime}_j\in \mathcal{X}_j}||x_j-x^{\prime}_j||$ which are non-zero for finite action and state space. 
\end{lemma}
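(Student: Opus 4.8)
The plan is to verify the two inequalities of Assumption~\ref{assum2} directly from the push-forward formula~(\ref{eq:pop_consist}) defining $\Gamma_2$, using two elementary conversions between the $l_1$-Wasserstein metric and the $l_1$ (total-variation) distance on the finite spaces $\mathcal{X}_j$ and $\mathcal{A}_j$. The first conversion is that on a bounded space $W_1(\mu,\nu)\le \tfrac12 diam(\mathcal{X}_j)\|\mu-\nu\|_1$, obtained by bounding the transport cost of any coupling by the diameter on the event that the two coordinates differ. The second, which supplies the denominator $d_{min}(\mathcal{A}_j)$, is the reverse comparison on the action space: since moving mass between distinct actions costs at least $d_{min}(\mathcal{A}_j)$, any two action distributions satisfy $\tfrac12\|\mu-\nu\|_1\le W_1(\mu,\nu)/d_{min}(\mathcal{A}_j)$, and hence $\tfrac12\|\pi^{j,\prime}(\cdot|x_j)-\pi^j(\cdot|x_j)\|_1\le D(\pi^{j,\prime},\pi^j)/d_{min}(\mathcal{A}_j)$ uniformly in $x_j$.

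For the first inequality (Lipschitzness in the policy) I fix $(z_1,z_2)$ and subtract the two type-$j$ push-forwards, so that the value at a next state $\tilde x$ is $\sum_{x_j,a_j}\tau^j(\tilde x|x_j,a_j,z_1,z_2)\,[\pi^{j,\prime}-\pi^j](a_j|x_j)\,z_j(x_j)$. Bounding $\tau^j\le c_1$ and pulling the absolute value inside gives the per-coordinate estimate $c_1\sum_{x_j}z_j(x_j)\|\pi^{j,\prime}(\cdot|x_j)-\pi^j(\cdot|x_j)\|_1$; summing over the $|\mathcal{X}_j|$ next states and using $\sum_{x_j}z_j(x_j)=1$ controls the $l_1$ distance of the outputs. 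Feeding this into the diameter bound and then the action-space comparison, the factor $2$ from the latter cancels the $\tfrac12$ of the former, leaving exactly $d_2=\max_j diam(\mathcal{X}_j)\max_j|\mathcal{X}_j|c_1/\min_j d_{min}(\mathcal{A}_j)$ after taking maxima over the two types and over $\max(D(\pi^{1,\prime},\pi^1),D(\pi^{2,\prime},\pi^2))$.

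For the second inequality (Lipschitzness in the population) I fix the policies and vary $(z_1,z_2)$. Writing $\Gamma_2$ as $z\mapsto zP(z)$ with $P(z)$ the induced state-to-next-state kernel, I split the difference as $z[P(z)-P(z')]+[z-z']P(z')$. The first summand carries the dependence of $\tau^j$ on the population and is controlled by the $c_2$-Lipschitz hypothesis; aggregated over next states it has $l_1$ norm at most $c_2 W_1((z_1,z_2),(z_1^{\prime},z_2^{\prime}))$, and the diameter conversion then yields the claimed $d_3=\max_j diam(\mathcal{X}_j)c_2/2$.

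The main obstacles I anticipate are twofold. First, $\Gamma_2$ acts on the \emph{joint} space $\mathcal{P}(\mathcal{X}_1\times\mathcal{X}_2)$ while~(\ref{eq:pop_consist}) updates each type marginally, so I must fix a product metric on $\mathcal{X}_1\times\mathcal{X}_2$ under which the joint $W_1$ is controlled by the two per-type distances; this is precisely what converts the per-type estimates into the stated $\max_j$ expressions. Second, and more delicate, is the \emph{weight-change} term $[z-z']P(z')$ in the second inequality: a fully rigorous argument must show it does not inflate $d_3$ beyond the kernel contribution, which needs either the policy held as a $z$-independent state-to-action map or a contraction property of $P(z')$ in $W_1$. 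Closely related is reconciling the per-next-state form in which the $c_2$-Lipschitz hypothesis is stated with the aggregate $l_1$ bound used above, since a naive summation over $\tilde x$ would introduce a spurious $|\mathcal{X}_j|$ factor; pinning down this interpretation is the crux of obtaining the constant $d_3$ exactly as claimed.
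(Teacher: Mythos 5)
The paper itself offers no proof of this lemma (it is stated without proof, and the appendix only proves Theorem~3), so there is nothing to compare your route against; I can only judge the proposal on its own terms. Your derivation of $d_2$ is sound and complete: the per-next-state bound $c_1\sum_{x_j}z_j(x_j)\lVert\pi^{j,\prime}(\cdot|x_j)-\pi^j(\cdot|x_j)\rVert_1$, the sum over the $|\mathcal{X}_j|$ next states, the conversion $W_1\le\tfrac12\,diam(\mathcal{X}_j)\lVert\cdot\rVert_1$ on the state space, and the reverse comparison $\tfrac12\lVert\mu-\nu\rVert_1\le W_1(\mu,\nu)/d_{min}(\mathcal{A}_j)$ on the action space do combine to give exactly the stated constant.

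The $d_3$ part, however, contains a genuine gap that you flag but do not close, and it does not go away. First, the weight-change term $[z-z']P(z')$ is not negligible: $P(z')$ is merely a stochastic kernel, so without an additional contraction (Dobrushin-type) hypothesis the best generic bound is $W_1(zP(z'),z'P(z'))\le\tfrac12\,diam(\mathcal{X}_j)\lVert z-z'\rVert_1$, and converting $\lVert z-z'\rVert_1$ back to $W_1(z,z')$ costs a factor $2/d_{min}(\mathcal{X}_j)$, yielding a contribution of order $diam(\mathcal{X}_j)/d_{min}(\mathcal{X}_j)\ge 1$ to $d_3$ --- which would make the condition $d_1d_2+d_3<1$ vacuous. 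The stated $d_3=\max_j diam(\mathcal{X}_j)c_2/2$ accounts only for the kernel-variation term, so either the lemma implicitly holds the weights fixed in the second inequality of Assumption~2 or an extra hypothesis on $\tau^j$ is needed; your proof must commit to one of these and cannot leave it as an anticipated obstacle. Second, even the kernel-variation term does not come out to the stated constant under the hypothesis as written: the Lipschitz condition is pointwise in $x'_j$, so summing over next states gives $\lVert zP(z)-zP(z')\rVert_1\le|\mathcal{X}_j|\,c_2\,W_1((z_1,z_2),(z_1',z_2'))$ and hence $d_3=\tfrac12\max_j diam(\mathcal{X}_j)\max_j|\mathcal{X}_j|\,c_2$, with an extra $|\mathcal{X}_j|$ that the lemma omits. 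To obtain the claimed $d_3$ you must reinterpret the hypothesis as an aggregate bound $\sum_{x'_j}|\tau^j(x'_j|x_j,a_j,z_1,z_2)-\tau^j(x'_j|x_j,a_j,z_1',z_2')|\le c_2W_1(\cdot,\cdot)$; state that reinterpretation explicitly rather than deferring it, since as it stands the proposal proves a weaker statement than the lemma.
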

Note that since we need $d_1d_2+d_3<1$, thus, the value of $c_2$ must be small. Intuitively, if the transition probabilities do not change much with the change in the population distribution, the optimal policy will also not change much which leads to a convergence. 

Further, if there is a gap between any two actions, the optimal policy would not change much even when the population distributions change. 
\section{Reinforcement Learning Based Algorithm}\label{sec:rl}
In this section, we provide a reinforcement learning based algorithm to find policies for agents when they are unaware of the reward and the transition probability dynamics. 
\subsection{Proposed Algorithm}
We propose a policy gradient based algorithm to find a MMFE. The policy gradient mechanism is a model free approach. {\em Compared to \cite{mahajan}, we characterize the conditions, which are easier to verify for a policy gradient mechanism to converge}.  We, first, characterize a simulator which is essential for the algorithm.

\textbf{Simulator}: We assume that the algorithm has access to a simulator which would give samples of next states according to the transition probability $\tau^j(\cdot|x_j,a_j,z_1,z_2)$  and the corresponding rewards for both $j=1,2$. This is a standard assumption in the literature for computing optimal RL algorithm. 

In the policy gradient  mechanism, we represent the policies of agents of type $j$, $\pi^j$ as parameterized by the parameter $\theta_j$.  Specifically, $\pi_{\theta_j}$ gives a choice of action for an agent of type $j$ for every state.  One of the classical examples is Boltzman policy. In the Boltzman policy, $\pi_{\theta_j}$ gives the probability of taking action $a_j$ when the state is $a_j$ via the following function $\dfrac{e^{h(x_j,a_j,\theta_j)}}{\sum_{a_k}e^{h(x_j,a_k,\theta_j)}}$  where $h(x_j,a_k,\theta)=\sum_{i}\theta_{j,i}f^j_i(x_j,a_k)$ and $f^j_i$ is a function of the state $x_j$ and action $a_k$ for agents of type $j$. 

Now, we specify an algorithm (Algorithm~\ref{alg1}) to provide an unbiased estimator of estimating the $Q$-function for the agents  which we then use it to define our policy gradient mechanism. 
\begin{algorithm}[!t]
%
	\caption{EST-Q:Unbiasedly estimating $Q$-function for agent type $j=1,2$}
	\label{alg1}
	\begin{algorithmic}
		\State \textbf{Input: } 
		$s$: Initial State, 
		$a$: Action, 
		$\theta_j$: Policy Parameter,
		$z_{1},z_{2}$: Population distribution,
		The population simulator
		\State \textbf{Output: } $Q$- function for a given policy
		
		\State Initialize $\hat{Q}\rightarrow 0, s_0\rightarrow 0, a_0\rightarrow a$.
		\State Draw $T$ from a Geometric Distribution with parameter $1-\gamma^{1/2}$, i.e., $\Pr(T=t)=(1-\gamma^{1/2})\gamma^{t/2}$.
		\For{$t=0, \cdots T-1$}
		\State Collect and add the instantaneous reward $r^j(s_t,a_t,z_{1},z_{2})$ to $\hat{Q}$, $\hat{Q}\rightarrow \hat{Q}+\gamma^{t/2} r^j(s_t,a_t,z_{1},z_{2})$.
		\State Simulate the next state according to a population simulator and action $a_{t+1}\sim \tilde{\pi}_{\theta_j}(\cdot| s_t,a_t)$		
	        \EndFor
		\State Collect $r^j(s_T,a_T,z_{1},z_{2})$ and update  $\hat{Q}=\hat{Q}+\gamma^{T/2}r^j(s_T,a_T,z_1,z_2)$.
		
		\Return $\hat{Q}$.
	\end{algorithmic}
\end{algorithm}
Algorithm~\ref{alg1} provides an unbiased estimator of $Q^j(x,a,\pi_{\theta_j},z_{l},z_{f})$. It gives an unbiased estimator because of the random-horizon setting via Monte-carlo rollout \cite{koppel}. Note that since the horizon $T$ is chosen randomly from a geometric distribution of parameter $1-\gamma^{1/2}$, the probability that the horizon is of length at least $t$ is given by $\gamma^{t/2}$. Now, combined with the fact that the reward is multiplied by $\gamma^{t/2}$ in the algorithm, we obtain the unbiased estimator of the $Q$-function.  While in practice, usually finite {\em deterministic} horizon roll-outs are used to estimate  infinite $Q$-function. However, it would create bias in estimating $Q$-function and hence it would end-up in creating biases in policy gradient method. In \cite{mahajan}, authors also relied on unbiased estimator of $Q$-function without explicitly stating how to find it.


Note that in the MMFE, the policy must be optimal in the value function starting from each initial state. However, it is computationally difficult to verify whether a policy is optimal for  starting from each initial state. Rather, we characterize the MMFE which provides an optimal policy for agents for the expected value function where the expectation is taken with respect to the population dynamics. Thus, we seek to compute policy such that
\begin{itemize}
\item $\pi_{\theta_j}$ maximizes, $J^j(\theta_j,z_1,z_2)=\mathbb{E}_{x_j\sim z_j}[V^j(x_j,\pi_{\theta_j},z_1,z_2)]$ for $j=1,2$
\item $(z_1,z_2)$ both satisfy (\ref{eq:state_consist}).
\end{itemize}
Since $\pi_{\theta_j}$ maximizes $J^j$, thus, from the first order of stationary condition, $\nabla J^f(\theta^{*}, z_l,z_f)=0$ and $\nabla J^j(\theta_j^{*},z_1,z_2)=0$.

From the policy-gradient method, we can represent the gradient of $J^j$   in the following form for $j=1,2$
\begin{align}\label{eq:follow_grad}
&\nabla J^j(\theta_j,z_1,z_2) =\sum_{y_0}z_j(y_0)\sum_{t=0}^{\infty}\gamma^t\sum_{y}\rho^j(y_{t}=y|y_0,\pi_{\theta_j},z_1,z_2)\nonumber\\
&\sum_{a}\nabla \pi_{\theta_j}(a|y) Q^j(y,a,\pi_{\theta_j},z_1,z_2)\nonumber\\
& =\sum_{y_0}z_j(y_0)\sum_{t=0}^{\infty}\gamma^t\sum_{y}\rho^j(y_{t}=y|y_0,\pi_{\theta_j},z_1,z_2)\nonumber\\
&\sum_{a}\pi_{\theta_j}(a|y)\nabla \log\pi_{\theta_j}(a|y)Q^j(y,a,\pi_{\theta_j},z_1,z_2)
\end{align}
where 
\begin{equation}
\rho^j(y_t=y|y_0,\pi_{\theta_j},z_1,z_2)=\sum_{k=0}^{t-1}\sum_{b^{k}}\pi_{\theta_j}(b^{k}|y_{k})\tau^j(y_{k+1}|y_{k},b^{k},z_1,z_2)\nonumber
\end{equation}
i.e., the probability that the state is $y$ at time $t$ for agent of type $j$ under the policy $\pi_{\theta_j}$ and when it starts from the state $y_0$. The first expression comes from the standard policy gradient mechanism. The second equality comes from the fact that $\nabla \log(x)=\dfrac{\nabla{x}}{x}$.


Now, we describe the policy gradient mechanism in order to find policies for leaders and followers,
\begin{algorithm}[!t]
%
	\caption{RHPG-MMFE: Random Horizon Policy gradient to find MMFE}
	\label{alg1}
	\begin{algorithmic}
		\State \textbf{Input: } Population simulator
		\State \textbf{Output: } $\pi_{\theta_j}$, $j=1,2$. 
			\State \textbf{Initialization: }
			$z_{1,0},z_{2,0}$: Initial Population distribution uniform distribution\\
		
			$m\leftarrow 0$, $z_{1,m}\leftarrow z_{1,0}, z_{2,m}\leftarrow z_{2,0}$
		\While{Until Convergence}
		\State \textbf{Initialization }$\theta_{j,k}$ set at $0$, $k\leftarrow 0$, for $j=1,2$
		\While {Until Convergence}
	        \State Sample $x_{j,k}$ from $z_{j,m}$ for $j=1,2$
	        \State $x_{j,0}\leftarrow x_{j,k}$ for $j=1,2$
		\State Draw $T_{k+1}$  from Geometric Distribution with parameter $(1-\gamma)$, $a_{j,0}\sim \pi_{\theta_{j,k}}(\cdot|x_{j,0})$ for $j=1,2$.
		\For {$t=0,1,\ldots, T_{k+1}-1$}
		\State Simulate the next state $x_{j,t+1}\sim\ \tau^j(\cdot|x_{j,t},a_{j,t},z_{1,m},z_{2,m})$ for $j=1,2$  from the simulator. 
		\State Simulate the action $a_{j,t+1}\sim\pi_{\theta_{j,k}}(\cdot|x_{j,t+1})$ for $j=1,2$.
		\EndFor
		\State Obtain an Estimate of $Q^j_(x_{j,T_{k+1}},a_{j,T_{k+1}},\pi_{\theta_{j,k}},z_{1,m},z_{2,m})$ for $j=1,2$ using Algorithm 1, i.e., 
		\State $\hat{Q}^1(x_{1,T_{k+1}},a_{1,T_{k+1}},\pi_{\theta_{1,k}},z_{1,m},z_{2,m})$=EST-Q$(x_{1,T_{k+1}},a_{1,T_{k+1}},\pi_{\theta_{j,k}},z_{1,m},z_{2,m})$.
		\State $\hat{Q}^2(x_{2,T_{k+1}},a_{2,T_{k+1}},\pi_{\theta_{2,k}},z_{1,m},z_{2,m})$=EST-Q$(x_{2,T_{k+1}},a_{2,T_{k+1}},\pi_{\theta_{2,k}},z_{1,m},z_{2,m})$
		\State $\theta_{j,k+1}\leftarrow \theta_{j,k}+\dfrac{\alpha_k\hat{Q}^j(\cdot)\nabla\log(\pi_{\theta_{j,k}}(a_{j,T_{k+1}}|x_{j,T_{k+1}}))}{1-\gamma}$ for $j=1,2$.
		\State Update Iteration $k\leftarrow k+1$
		\EndWhile Until convergence
%
\For {$j=1,2$}
		\For {$l=1,\ldots,N_j$}
		
				\State Sample $x_{j,l}$ from $z_{j,m}$  take action $a_j\sim\pi_{\theta_{j,k}}(\cdot|x_{j,l})$,  simulate the next state $x_{j,m+1}$ . 
				\State $z_{j,m+1}(x)=\dfrac{1}{N_j}\mathbbm{1}(x_{j,l}=x)$ for $j=1,2$
				\EndFor
				\EndFor
				\State Update outer iteration counter $m\leftarrow m+1$.
		\EndWhile
	\end{algorithmic}
\end{algorithm}
As an extension of the result of \cite{koppel}, we obtain
\begin{lemma}
For a given $z_1,z_2$,let
\begin{align}\label{eq:grad_nab}
\nabla \hat{J}^j(\theta_j)=\dfrac{1}{1-\gamma}\hat{Q}^j(x_{j,T},a_{j,T},\pi_{\theta_j}z_1,z_2)\nabla\log(\pi_{\theta_j}(a_T|x_T))
\end{align}
for $j=1,2$. Then, for any $\theta_j$, for$j=1,2$,
\begin{align}
\mathbb{E}[\nabla \hat{J}^j(\theta_j,z_l,z_f)|\theta_j]=\nabla J^j(\theta_j)
\end{align}
where the expectation is taken over the random sample from $z_1$ and $z_2$, random horizon $T^{\prime}$, the trajectory along $(x_{1,0},a_{1,0},x_{2,0},a_{2,0}\ldots,\\x_{1,T^{\prime}_1},a_{1,T^{\prime}_1},x_{2,T^{\prime}},a_{2,T^{\prime}})$.
\end{lemma}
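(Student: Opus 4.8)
The plan is to compute the full expectation of $\nabla \hat{J}^j(\theta_j)$ by peeling off the two independent sources of randomness in turn---the inner rollout inside EST-Q, and the outer random horizon together with the sampled trajectory---and to show that the resulting expression coincides term-by-term with the policy-gradient formula in (\ref{eq:follow_grad}).

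First I would condition on the outer trajectory, in particular on the terminal pair $(x_{j,T}, a_{j,T})$ at the random horizon $T$. By the unbiasedness of Algorithm~\ref{alg1} (EST-Q), already established via the Monte-Carlo rollout over a $\mathrm{Geom}(1-\gamma^{1/2})$ horizon, one has
\begin{equation}
\mathbb{E}\big[\hat{Q}^j(x_{j,T}, a_{j,T}, \pi_{\theta_j}, z_1, z_2) \,\big|\, x_{j,T}, a_{j,T}\big] = Q^j(x_{j,T}, a_{j,T}, \pi_{\theta_j}, z_1, z_2).
\end{equation}
Since the factor $\nabla \log \pi_{\theta_j}(a_{j,T}\mid x_{j,T})$ is measurable with respect to the conditioning $\sigma$-field, I can replace $\hat{Q}^j$ by $Q^j$ inside the expectation without changing its value by the tower property.

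Next I would take the expectation over the outer randomness: the initial draw $x_{j,0} \sim z_j$, the horizon $T \sim \mathrm{Geom}(1-\gamma)$, and the trajectory generated by $\tau^j$ and $\pi_{\theta_j}$. Writing the probability that the chain is in state $y$ at step $t$ as $\rho^j(y_t = y \mid y_0)$ and that action $a$ is then chosen as $\pi_{\theta_j}(a \mid y)$, the terminal pair $(x_{j,T}, a_{j,T})$ has law $\sum_t \Pr(T=t)\,\rho^j(y_t=y\mid y_0)\,\pi_{\theta_j}(a\mid y)$. The crucial identity is that, for $\Pr(T=t) = (1-\gamma)\gamma^t$, the leading constant $\tfrac{1}{1-\gamma}$ in (\ref{eq:grad_nab}) exactly cancels the mass $(1-\gamma)$, leaving $\tfrac{1}{1-\gamma}\Pr(T=t) = \gamma^t$. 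Substituting and summing yields precisely
\begin{equation}
\mathbb{E}[\nabla \hat{J}^j(\theta_j)] = \sum_{y_0} z_j(y_0) \sum_{t=0}^{\infty} \gamma^t \sum_y \rho^j(y_t=y\mid y_0) \sum_a \pi_{\theta_j}(a\mid y) \nabla\log\pi_{\theta_j}(a\mid y)\, Q^j(y,a,\pi_{\theta_j},z_1,z_2),
\end{equation}
which is the right-hand side of (\ref{eq:follow_grad}), i.e.\ $\nabla J^j(\theta_j)$.

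The main obstacle I expect is bookkeeping the independence structure so that the tower property applies cleanly: the inner EST-Q rollout must be generated with fresh randomness independent of the outer horizon $T$ and the outer trajectory, so that conditioning on $(x_{j,T},a_{j,T})$ leaves the inner estimator's conditional expectation equal to $Q^j$. A secondary subtlety is aligning the two geometric parameters---$1-\gamma^{1/2}$ inside EST-Q versus $1-\gamma$ for the outer horizon---and confirming that the $\gamma^{t/2}$-reweighting inside EST-Q indeed recovers the full discounted $Q^j$ before it is plugged in; this is where the extension of \cite{koppel} is invoked. Once independence is pinned down, the remaining algebra is the single cancellation $\tfrac{1}{1-\gamma}\Pr(T=t)=\gamma^t$, and the two statements match identically.
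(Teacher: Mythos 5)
Your proposal is correct and follows essentially the same route as the paper, which only sketches the argument by invoking the unbiasedness of EST-Q (via the random-horizon Monte-Carlo rollout of \cite{koppel}) and then appealing to the policy-gradient identity in (\ref{eq:follow_grad}). You have simply made explicit the two steps the paper leaves implicit---the tower-property conditioning on $(x_{j,T},a_{j,T})$ and the cancellation $\tfrac{1}{1-\gamma}\Pr(T=t)=\gamma^t$ that recovers the discounted visitation sum---so no gap remains.
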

The above lemma indicates that $\nabla \hat{J}^j$ for $j=1,2$ is an unbiased estimator of $\nabla{J}^j$.  Intuitively, from Algorithm 1 note that $\hat{Q}^j_(x_{j,T},a_{j,T},\\\pi_{\theta_j},z_1,z_2)$  gives an unbiased estimator of $Q^j(x_{j,T},a_{j,T},\pi_{\theta_j},z_1,z_2)$. Thus, from (\ref{eq:follow_grad})  it follows that $\nabla\hat{J}^j$   gives unbiased estimators of $\nabla{J}^j$ for $j=1,2$.

In Algorithm 2, we first fix $z_1,z_2$ and try to find the optimal value functions for leaders and followers for the given $z_1,z_2$ similar to Step 1 in Section~\ref{sec:compute}. The parameters are updated using these unbiased estimators in Algorithm 2 and thus, it would converge when a stationary point exists. Once the inner loop converges, we compute new $z^{\prime}_{1}$ and $z_2^{\prime}$ based on the optimal policies acted upon $z_1$ and $z_2$ similar to Step 2. When $z_{1}^{\prime}$ and $z_{2}^{\prime}$ become close to $z_l, z_f$ respectively, we stop the algorithm similar to Step 3.

In order to find optimal policy for a given $z_l,z_f$, note that first a random horizon is selected with Geometric distribution and given the policy and population simulator, next state and action are obtained both for leader and follower. This is done to get an unbiased estimate of $\gamma^t\rho^j(y_t=y|y_0,\pi_{\theta_{j,k}},z_1,z_2)$. After that the algorithm obtains the unbiased estimator of $Q$-function from Algorithm 1. Finally, the parameters are updated using the unbiased estimator of the gradient. 
\subsection{Convergence of Policy Gradient}
In  this section, we prove the convergence of the policy gradient mechanism to a MMFE under certain additional assumptions. 

\begin{assum}\label{assum3}
\begin{itemize}
\item The reward function is bounded 
\item $\pi_{\theta_j}$ is differentiable with respect to $\theta_j$ for $j=1,2$. Further, for any $(x_j,a_j)$ for $j=1,2$
\begin{align}
& ||\nabla\log(\pi_{\theta^1_{j}}(a_j|x_j))-\nabla\log(\pi_{\theta^2_{j}}(a_j|x_j))||\leq L_{\theta}||\theta^1_j-\theta^2_j||\quad\text{for any }\theta^1_j,\theta^2_j \nonumber\\
& ||\nabla\log(\pi_{theta_j}(a_j|x_j))||\leq B_{\theta}\nonumber
\end{align}
\item $\alpha_k$ is such that 
\begin{align}
\sum_{k=0}^{\infty}\alpha_k=\infty,\quad \sum_{k=0}^{\infty}\alpha_k^2<\infty.\nonumber
\end{align}
\end{itemize}
\end{assum}
A lot of policies satisfy the second condition including Boltzman-policy and Gaussian policies. In the simulation, we choose $\alpha_k=k^{-a}$ for some constant $a\in (1/2,1)$. 

First, we show the convergence of the random horizon policy gradient mechanism to the optimal policy,,
\begin{theorem}\label{thm:pol}
Under Assumption~\ref{assum3}, as $k\rightarrow \infty$, $\theta_{j,k}\rightarrow \theta^*_j$,  where $\theta_j^{*}$ is are the optimal policy parameter for agents of type $j$ for $j=1,2$ for a given $z_{1,m},z_{2,m}$. 
\end{theorem}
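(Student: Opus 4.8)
The plan is to recognize the inner loop of Algorithm~2 as stochastic gradient ascent on the objective $J^j(\cdot,z_{1,m},z_{2,m})$ and to invoke standard stochastic-approximation theory. Since $(z_{1,m},z_{2,m})$ is held fixed throughout the inner loop and the updates for $\theta_1$ and $\theta_2$ do not couple, it suffices to argue convergence for a single type $j$ and then apply the argument verbatim to the other type. Writing the update as $\theta_{j,k+1}=\theta_{j,k}+\alpha_k\nabla\hat J^j(\theta_{j,k})$, the preceding lemma already supplies the crucial property that $\mathbb{E}[\nabla\hat J^j(\theta_{j,k})\mid\theta_{j,k}]=\nabla J^j(\theta_{j,k})$, so the iteration is an unbiased stochastic gradient scheme, and the Robbins--Monro conditions $\sum_k\alpha_k=\infty$, $\sum_k\alpha_k^2<\infty$ of Assumption~\ref{assum3} are exactly what such theory requires.

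To apply a convergence theorem (for instance the supermartingale argument of Robbins--Siegmund, or equivalently the gradient-method-with-errors result of Bertsekas--Tsitsiklis) two analytic facts remain. First I would bound the second moment of the estimator: because the reward is bounded (Assumption~\ref{assum3}) and $0<\gamma<1$, the estimate $\hat Q^j$ returned by Algorithm~1 is almost surely bounded by a constant $C_Q=C_Q(R_{\max},\gamma)$ obtained by summing the geometric weighting, and since $\|\nabla\log\pi_{\theta_j}\|\le B_\theta$, formula~(\ref{eq:grad_nab}) gives $\mathbb{E}[\|\nabla\hat J^j(\theta_j)\|^2]\le (C_Q B_\theta/(1-\gamma))^2=:\sigma^2<\infty$ uniformly in $\theta_j$. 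Second I would establish that $J^j$ is $L$-smooth, i.e.\ that $\nabla J^j$ is Lipschitz in $\theta_j$. Differentiating the policy-gradient expression~(\ref{eq:follow_grad}), the gradient is an infinite discounted sum over the occupancy measure $\rho^j$ of terms containing $\pi_{\theta_j}\nabla\log\pi_{\theta_j}\,Q^j$; combining the bound $\|\nabla\log\pi_{\theta_j}\|\le B_\theta$, its Lipschitz constant $L_\theta$, the uniform bound on $Q^j$, and a standard estimate controlling how $\rho^j(y_t=y\mid y_0,\pi_{\theta_j},\cdot)$ changes with $\theta_j$ through the product of per-step transition-and-policy factors, each summand is Lipschitz and the $\gamma^t$ discount makes the series summable, yielding a finite Lipschitz constant $L_J$.

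With unbiasedness, the uniform second-moment bound $\sigma^2$, and $L$-smoothness in hand, the chosen stochastic-approximation theorem yields $\nabla J^j(\theta_{j,k})\to 0$ almost surely, and hence $\theta_{j,k}$ converges to the stationary set $\{\theta_j:\nabla J^j(\theta_j,z_{1,m},z_{2,m})=0\}$, which is the first-order condition characterizing $\theta_j^\ast$. The main obstacle is the $L$-smoothness step: unlike the second-moment bound, which is essentially immediate, controlling the Lipschitz dependence of the infinite-horizon discounted occupancy measure $\rho^j$ on $\theta_j$ requires a careful telescoping/perturbation bound on the product of transition-and-policy factors, uniform over the horizon, and it is here that both the Lipschitz-score hypothesis $L_\theta$ and the discount $\gamma<1$ are indispensable. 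I would also flag the standard caveat that, absent concavity of $J^j$, this argument certifies convergence only to a first-order stationary point; the identification of that point with the optimal parameter $\theta_j^\ast$ presumes that stationary points of the parametrized objective are optimal.
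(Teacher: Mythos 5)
Your proposal is correct and follows essentially the same route as the paper, which gives no self-contained argument at all: it simply defers to Theorem~3.4 of \cite{koppel}, the random-horizon policy gradient convergence result, and your sketch is precisely the standard stochastic-approximation argument (unbiasedness from the preceding lemma, a uniform second-moment bound from the bounded reward and bounded score, $L$-smoothness of $J^j$ via the discounted occupancy measure, then Robbins--Monro step sizes) that underlies that citation. One point worth emphasizing: your closing caveat is not merely a formality but identifies a real gap between what this argument delivers and what the theorem asserts. The supermartingale machinery yields $\nabla J^j(\theta_{j,k})\rightarrow 0$, i.e.\ convergence to the stationary set, whereas the theorem as stated claims convergence to \emph{the} optimal parameter $\theta_j^{*}$; without concavity of $J^j$ in $\theta_j$ (or a gradient-domination condition on the parametrization), the identification of stationary points with optima is an additional hypothesis that neither Assumption~\ref{assum3} nor the cited result supplies. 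The paper inherits this same gap by citation, so your reconstruction is faithful to, and if anything more honest than, the paper's own treatment.
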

The proof readily follows from Theorem 3.4 in \cite{koppel}, thus, we omit it here. The above theorem indeed shows that the algorithm finds the optimal policies for a given population distribution of both the types (similar to step 1). 

Equipped with the result, we have the following
\begin{theorem}
Under Assumptions~\ref{assum1},\ref{assum2} and \ref{assum3}, Algorithm 2 converges to the MMFE policy profile as $m\rightarrow \infty$, $N_1, N_2\rightarrow \infty$ and $k\rightarrow \infty$. 
\end{theorem}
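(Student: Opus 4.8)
The plan is to decouple the three limits and show that, in the limit, Algorithm 2 implements exactly the deterministic fixed-point iteration $z_{m+1}=\Gamma(z_m)$ analyzed in the existence theorem, after which I invoke the contraction property already established there. Concretely, I would take the limits in the order $k\to\infty$, then $N_1,N_2\to\infty$, then $m\to\infty$, identifying each loop of the algorithm with the corresponding map $\Gamma^j$ and $\Gamma_2$ from Section~\ref{sec:compute}.

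First I would handle the inner loop. For a fixed outer index $m$ with frozen population pair $(z_{1,m},z_{2,m})$, the inner recursion is precisely the random-horizon policy gradient scheme whose gradient estimator $\nabla\hat{J}^j$ is unbiased by the preceding Lemma. By Theorem~\ref{thm:pol}, under Assumption~\ref{assum3} the iterates $\theta_{j,k}$ converge as $k\to\infty$ to the stationary parameter $\theta_j^*$, so the resulting policy converges to the optimal policy $\pi_{\theta_j^*}=\Gamma^j(z_{1,m},z_{2,m})$ for each type $j$. This identifies the inner loop with the policy map $\Gamma^j$ of Step~1. Next I would handle the population update: after the inner loop, the algorithm draws $N_j$ i.i.d.\ samples from $z_{j,m}$, propagates each through the simulator under the converged policy, and forms the empirical next-state distribution $z_{j,m+1}$. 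As $N_1,N_2\to\infty$, a strong law of large numbers (Glivenko--Cantelli over the finite state space $\mathcal{X}_j$) shows that this empirical distribution converges almost surely to its mean, which is exactly the population-consistency map $\phi^j(\pi_{\theta_j^*},z_{1,m},z_{2,m})=\Gamma_2(\pi^1,\pi^2,z_{1,m},z_{2,m})$ defined in (\ref{eq:mean}). Composing the two steps shows that the outer recursion realizes $z_{m+1}=\Gamma_2(\Gamma^1(z_m),\Gamma^2(z_m),z_m)=\Gamma(z_m)$.

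Finally I would invoke the contraction. Under Assumptions~\ref{assum1} and \ref{assum2} together with the modulus condition $d_1d_2+d_3<1$, the map $\Gamma$ was shown to be a $W_1$-contraction; hence by Banach's fixed point theorem the outer iterates $z_m$ converge to the unique fixed point $(z_1^*,z_2^*)$, and $\pi_{\theta_j^*}=\Gamma^j(z_1^*,z_2^*)$ together with $(z_1^*,z_2^*)$ is by definition a stationary MMFE.

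The hard part is the rigorous coupling of the three limits rather than any single step. Taking $k\to\infty$ only yields the optimal policy up to an error $\epsilon_k\to 0$, and finite $N_j$ leaves a sampling error $\epsilon_N\to 0$; to conclude joint convergence one must show these perturbations do not accumulate across outer iterations. The cleanest route is to recast the outer recursion as a perturbed contraction $z_{m+1}=\Gamma(z_m)+e_m$ with $\|e_m\|\to 0$ and show that a contraction driven by vanishing additive errors still converges to its fixed point; controlling $e_m$ uniformly over $z_m$ requires the Lipschitz bounds of Assumptions~\ref{assum1}--\ref{assum2} together with uniform control of the policy-gradient and sampling errors. Care is also needed because the contraction is stated in $W_1$ while the inner-loop and sampling errors are controlled in the $D(\cdot,\cdot)$ metric, so one must chain the Lipschitz estimates from $D$ to $W_1$ supplied by $\Gamma^j$ and $\Gamma_2$ to transfer the error bounds onto the outer $W_1$-iteration.
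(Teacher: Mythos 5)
Your proposal is correct and follows essentially the same route as the paper's proof: the paper likewise combines Theorem~\ref{thm:pol} for the inner loop, a concentration bound (Hoeffding rather than your Glivenko--Cantelli/SLLN) for the empirical population update, and then exactly the perturbed-contraction recursion $W_1(z_{m+1},z_*)\le d\,W_1(z_m,z_*)+d_2\epsilon$ with $d=d_1d_2+d_3<1$, unrolled into a geometric series, to control error accumulation across outer iterations. The ``hard part'' you flag --- recasting the outer loop as a contraction driven by vanishing additive errors and chaining the $D$-to-$W_1$ Lipschitz bounds of Assumption~\ref{assum2} --- is precisely what the paper's argument carries out.
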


\begin{rmk}
Though the $Q$-estimator given by Algorithm 1 is unbiased, it may have high variance. We can reduce the variance using baseline method by using value function estimator. The analysis would be similar, and thus, we omit it here. 

Note that Monte-Carlo roll out may not be sample efficient, instead, one can use Bootstrapping method such as Actor-Critic method which is sample amiable. However, such a method may lead to an {\em biased} estimator of the $Q$-function. Developing a policy gradient algorithm using Actor-Critic method for computing MMFE is left for the future. 

Further, one can also employ function approximation techniques when the state-space and action spaces become large. 
\end{rmk}
Note that in practice, the  next state population distributions are obtained from the number of agents of each type (i.e., $N_j$ is equal to the number of agents of type $j$). When either  $N_1$ or $N_2$ is finite, the equilibrium error is at most $\min_{j}O(1/N_j)$   \cite{huang2006large,weintraub2006oblivious}. 

\section{Numerical Simulations}\label{sec:numerical}
\subsection{Cyber Attack}
\subsubsection{Model}
We simulate a cyber-attack model with large number of defenders  and attackers as a MFG with two different types of agents. 

A defender's state can be discretized into $n+1$ states $0, \frac{1}{n}, \cdots, 1$. A higher state means more vulnerable. Each defender can take action $a_t\in \{0,1\}$. If $a_t=0$, the next state is $0$, and it would be secure. If $a_t=1$, the defender does not take any action.

The state evolution model of the defender $i$ is--
\begin{align}
x_{t+1}^i=\begin{cases} x_t^i+w_{1,i}\quad \text{if }a_t^i=1\nonumber\\
0\text{ otherwise }\end{cases}
\end{align}
 where $w_{1,i}$ is a random variable with identical masses on $\{0,1/n,\ldots,1-x_t^i\}$.  
 
 The attacker's state is also discretized from $0, \frac{1}{n}, \cdots, 1$, in $n+1$ states. A lower state means more powerful. An attacker can successfully attack a defender if the attacker's state is lower than the defender. The defender can take an action $b_t=0$ if it wants to become extremely powerful, i.e., the next state would be $0$. 
 
 The  defender $i$'s state evolves as the following
 \begin{equation}
 y_{t+1}^i=\begin{cases} y_t^i+w_{2,i},\quad \text{if } b_t^i=1\nonumber\\
 0\text{ otherwise } \end{cases}
 \end{equation}
 where $w_{2,i}$ is a random variable with uniform mass on $\{0,1/n,\cdots, 1-y_t^i\}$.

Now, the reward model can be described as the following-- for a defender, 
\begin{align}
& r^1(x_t,a_t,z_{1,t},z_{2,t})=-g_1x_t-\nonumber\\& g_2x_t(\max\{\mathbb{E}(z_{1,t})-\mathbb{E}(z_{2,t}),0\})-(1-a_t)\lambda_1\nonumber
\end{align}
 $\lambda_1$ is the cost of securing the node. Note that a leader will be more vulnerable when its state is high.  Further, if the mean value of the states of the followers is small it means that there are more powerful attackers which can attack the defenders easily. Thus, a defender can also be infected if the mean state of the attackers (i.e.,$\mathbb{E}(z_{2,t}$) is smaller compared to the mean state of the defenders ($\mathbb{E}(z_{1,t})$).  

Likewise, the reward model for an attacker is
\begin{align}
& r^2(y_t,b_t,z_{1,t},z_{2,t})=-g_1y_t-\nonumber\\&g_2y_t(\max\{\mathbb{E}(z_{2,t})-\mathbb{E}(z_{1,t}),0\})-(1-b_t)\lambda_2
\end{align}
$\lambda_2$ is the cost for attacker to become most powerful. When the state of the follower is high, it would attack a defender, however, it would have smaller chance to breach the defender. On the other hand, it the mean state of the defenders is smaller compared to the attackers, the attackers may need to attack more number of defender since the virus propagation from one defender to another defender is small which deplete the energies of the attackers. 

\subsubsection{Set-up}
In our evaluation, we consider that there are $N_1=100$ number of defenders and $N_2=100$ number of attackers. We set $n=10$. We set $g_1=0.2, g_2=0.1, \lambda_1=\lambda_2=0.5$. The parameterized policy we consider is the Boltzman policy where action $0$ is taken with probability $1/(1+\exp(\theta_{1,1}x_1+\theta_{1,2}(1-x_1)))$ for defender at state $x$. Note that as $\theta_{1,1}$ decreases defenders take action $0$ with a higher probability as the state value $x$ increases. 

Similar to the defender, the attacker's policy is also considered to be of the following form-- at state $y$, action $0$ is taken with probability $1/(1+\exp(\theta_{2,1}y+\theta_{2,2}(1-y)))$.

\subsubsection{Results}
Fig.~\ref{fig:policy} shows the probabilities of taking action $a_t=0$ ($b_t=0$, resp.) by defender (attacker, resp.) as a function of state. As the state increases, the probability increases. This is because higher state indicates a lower reward for both defender and attacker. A defender will be more vulnerable to the attacker if the state is higher. On the other hand, higher state of an attacker indicates that the attacker can not successfully attack a defender. Thus, one takes action in order to achieve higher reward in the future. 

Fig.~\ref{fig:policy} shows a threshold type of behavior for both attacker and defender. When the state is less than $0.3$, the probability of taking any action is negligible for both attacker and defender.On the other hand, when the state exceeds $0.5$ both the attacker and defender take actions with certainty. When the state is $0.3$, the attacker takes an action with slightly higher probability because of slightly higher mean value of the state of attackers ($3.997$) compared to the defender ($3.955$). This slight variation is due to the parameter used to check convergence. Note that because of the symmetric nature of the game, both policies converge to almost same value. 

\begin{figure}
    \centering
    \includegraphics[width=80mm]{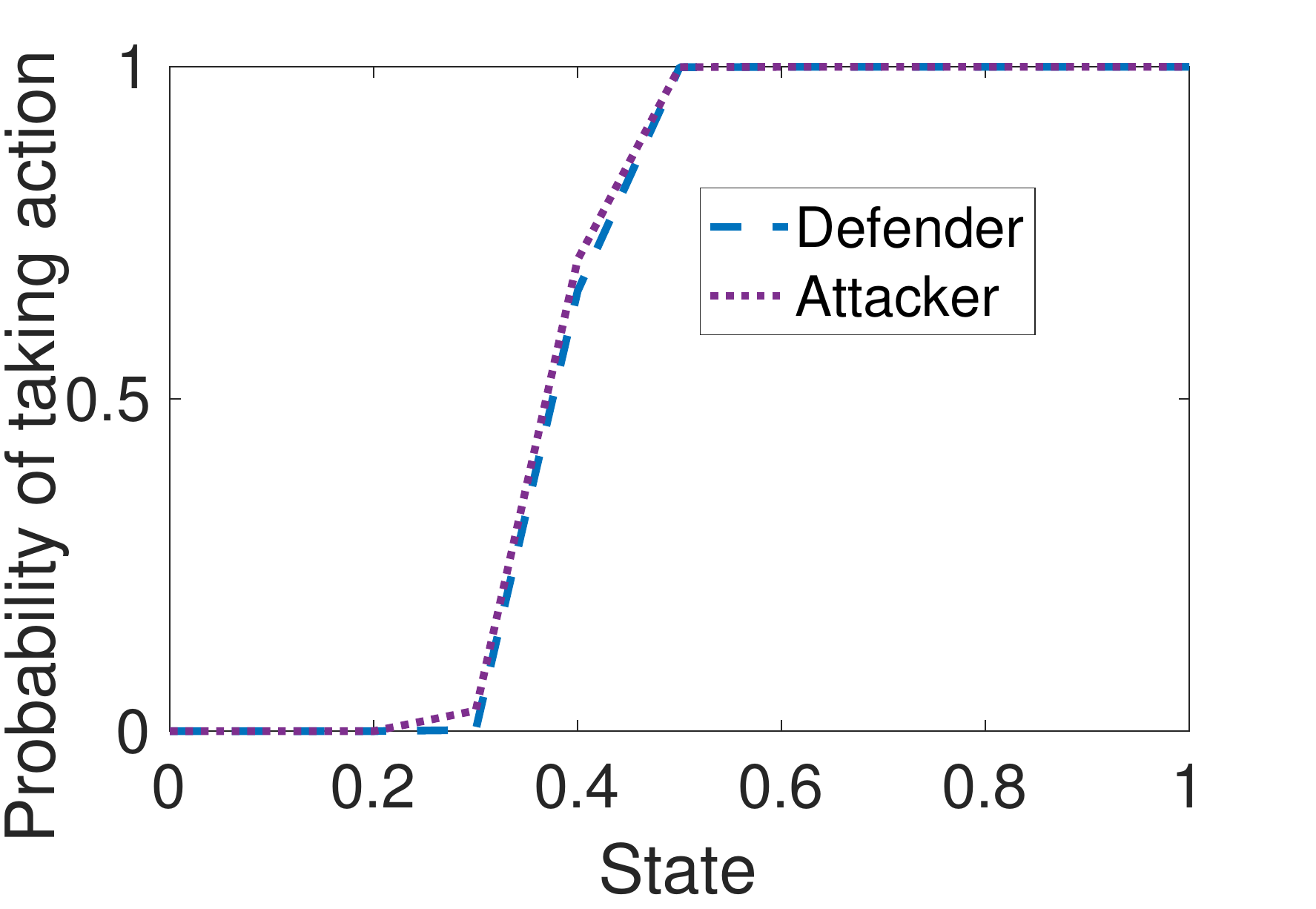}
    \vspace{-0.2in}
    \caption{Policies of defenders and attackers
    as function of states}
    \label{fig:policy}
    \vspace{-0.2in}
\end{figure}

\section{Discussion}\label{sec:discussion}
\subsection{MMFE with both state and action distribution}
As we have discussed in Remarks in ~\ref{rmk:act_bellman} and \ref{rm:act-mfe}, we can extend our analysis when the reward, and transition probability both depend on the action distribution of the players as well. We only need to replace $z_{j,t}$ with joint state and action distribution of the agents of type $j$ $(z_{j,t},\nu_{j,t})$ for $j=1,2$. The algorithm also converges under additional condition with replacing $z_{j}$ with $(z_{j},\nu_{j})$. 

\subsection{More than two types of agents}
We can extend the setting with more than two types of agents where each type has infinite number of players. For example, when there are more than two types of agents,the reward function and transition probabilities depend on state distribution functions of each types of agents and the individual state and action. Hence, in the stationary MMFE, one can compute the optimal policy for each type of agents while considering the other agents' distributions as fixed. After the optimal policy is found for a fixed set of population distribution for each set of agents, we update the population state distributions and we repeat until we reach the convergence. 

\subsection{Non-stationary MMFE}
Similar to the stationary MMFE, a non-stationary MMFE can also be obtained. First,  we fix the sequence of population state distributions $z_{j,t}$ for $j=1,2$ and $t=0,\cdots,\infty$ and obtain the optimal policies for the agents of both the types (Step 1). We then update the population state distribution $z_{j,t}^{\prime}$ across all the time-horizon for $j=1,2$. If $z_{1,t}^{\prime}$ and $z_{2,t}^{\prime}$ do not match with $z_{1,t}$ and $z_{2,t}$ respectively for every $t$, we then again obtain policies for the updated population state distributions $z_{j,t}^{\prime},$ (Step 2). We repeat the process until $z_{j,t}^{\prime}$ becomes equal to $z_{j,t}$ for $j=1,2$. Computing a non-stationary MMFE is computationally challenging. The characterization of a  computationally efficient RL algorithm to compute a non-stationary MMFE is left for the future. 
\section{Conclusions}
We study a  multi-agent multi-type Markov strategic interaction over a finite time steps for large number of agents of each type. A MMFE is defined as an equilibrium concept. An algorithm for known system dynamics is proposed that achieves a stationary MMFE. The condition under which a stationary MMFE exists is also characterized. A policy gradient based RL algorithm is proposed to obtain the stationary MMFE when the players are unaware of the dynamics. 

	\appendix
	\section{Proof of Theorem 3}
	
	First, we prove the theorem by assuming that the population simulator gives exactly $z_{l,m+1}$ and $z_{f,m+1}$ from $z_{l.m}$ and $z_{f,m}$, respectively, using $\pi$ and $w$.

Consider $m=M$-th iteration, 

From Theorem~\ref{thm:pol}, there exists a $K$ such that $||\theta_{j,K}-\theta^{*}_j||\leq \delta$ a such that $||\pi_{\theta_{j,K}}-\pi_{\theta^*_j}||\leq \epsilon$. Let $(z_1^{*}, z_2^{*})$ are the MFE population distributions. Recall that $\Gamma^1$  and $\Gamma^2$ are the optimal for leaders and followers respectively. Let us denote the policy of follower and leader at $m=M$-th iteration after $K$ number of inner iterations as $\pi_{\theta_{j,K}}(z_{1,m},z_{2,m})$. Now,
\begin{align}
& W_1((z_{1,m+1},z_{2,m+1}),(z_{1,*},z_{2,*}))=\nonumber\\
& W_1(\Gamma_2(\pi_{\theta_ {1,K}}(z_{1,m},z_{2,m}),\pi_{\theta_{2,K}}(z_{1,m},z_{2,m})),\nonumber\\& \Gamma_2(\Gamma^1(z_{1,*},z_{2,*}),\Gamma^2(z_{1,*},z_{2,*})))\nonumber\\
& \leq W_1(\Gamma_2(\Gamma^1(z_{1,m},z_{2,m}),\Gamma^2(z_{1,m},z_{2,m})),\nonumber\\& \Gamma_2(\Gamma_2(\Gamma^1(z_{1,*},z_{2,*}),\Gamma^2(z_{1,*},z_{2,*}))))+\nonumber\\
& W_1(\Gamma_2(\Gamma^1(z_{1,m},z_{2,m}),\Gamma^2(z_{1,m},z_{2,m})),\nonumber\\& \Gamma_2(\pi_{\theta _{1,K}}(z_{l,m},z_{f,m}),\pi_{\theta_{2,K}}(z_{1,m},z_{2,m})))\nonumber\\
& \leq dW_1((z_{1,m},z_{2,m}),(z_{1,*},z_{2,*}))+\nonumber\\ & d_2D((\Gamma^1(z_{1,m},z_{2,m}),\Gamma^2(z_{1,m},z_{2,m})),(\pi_{\theta_ {1,K}}(z_{1,m},z_{2,m}),\pi_{\theta_{2,K}}(z_{1,m},z_{2,m})))\nonumber\\
& \leq dW_1((z_{1,m},z_{2,m}),(z_{1,*},z_{2,*}))+d_2\epsilon
\end{align}
where $d=d_1d_2+d_3$. Hence, we have
\begin{align}
&W_1((z_{1,m+1},z_{2,m+1}),(z_{1,*},z_{2,*}))\leq \nonumber\\& d^mW_1((z_{1,0},z_{2,0}),(z_{1,*},z_{2,*}))+\dfrac{d_2\epsilon(1-d^m)}{1-d}
\end{align}
Since $d<1$, thus, there exists $M$ such that $d^m\leq \epsilon$ for $m>M$, and $\dfrac{d_2(1-d^m)}{1-d}<1$, hence we have
\begin{align}
W_1((z_{1,m+1},z_{2,m+1}),(z_{1,*},z_{2,*}))\leq O(\epsilon)
\end{align}

Now, we show the result for the weak population simulator which only gives the next  state of the agents a for a given state of the agent, action, and population distribution.
Let us denote the actual population dynamics be $z^{\prime}_{1,m+1}$ and $z^{\prime}_{2,m+1}$ and the population dynamics returned by Algorithm 2 are $z_{1,m+1}$ and $z_{2,m+1}$, respectively. 

Now, 
\begin{align}\label{eq:bound}
& W_1((z_{1,m+1},z_{2,m+1}),(z_{1,*},z_{2,*}))\leq \nonumber\\&  W_1(\Gamma_2(\pi_{\theta_{1,K}}(z_{1,m},z_{2,m}),\pi_{\theta_{2,K}}(z_{1,m},z_{2,m})),\nonumber\\& \Gamma_2(\Gamma^1(z_{1,*},z_{2,*}),\Gamma^2(z_{1,*},z_{2,*})))\nonumber\\& +W_1((z_{1,m+1},z_{2,m+1}),\Gamma_2(\pi_{\theta_{1,K}}(z_{1,m},z_{2,m}),\pi_{\theta_{2,K}}(z_{1,m},z_{2,m})))\nonumber\\&
=W_1(\Gamma_2(\pi_{\theta_{1,K}}(z_{1,m},z_{2,m}),\pi_{\theta_{2,K}}(z_{1,m},z_{2,m})),\nonumber\\& \Gamma_2(\Gamma^1(z_{1,*},z_{2,*}),\Gamma^2(z_{1,*},z_{2,*})))+\nonumber\\&
W_1((z_{1,m+1},z_{2,m+1}),\mathbb{E}(z_{1,m+1},z_{2,m+1}))
\end{align}
where
\begin{align}
&\mathbb{E}(z_{1,m+1}(x^{\prime}))=\nonumber\\& \sum_{x}\sum_{a}z_{1,m}(x)\tau^1(x^{\prime}|x,\pi_{\theta_{1,K}}(a|x),z_{1,m},z_{2,m})\nonumber\\
& \mathbb{E}(z_{2,m+1}(y^{\prime}))=\nonumber\\ & \sum_y\sum_{a}z_{2,m}(y)\tau^2(y^{\prime}|y,\pi_{\theta_{2,K}}(b|y),z_{1,m},z_{2,m})
\end{align}
 From Hoeffding's inequality, we have
 \begin{align}
 \Pr(|z_{1,m+1}(x^{\prime})-\mathbb{E}(z_{1,m+1}(x^{\prime}))|>t)\leq 2\exp(-N_1t^2)\nonumber\\
 \Pr(|z_{2,m+1}(y^{\prime})-\mathbb{E}(z_{2,m+1}(y^{\prime}))|>t)\leq 2\exp(-N_2t^2)\nonumber
 \end{align}
Hence, we can bound the second expression in the right hand side of (\ref{eq:bound}) by $\epsilon$ with high probability for large enough $N_1$ and $N_2$. Since we already have obtained bound for the first expression in the right hand side. Hence, with a high probability, we prove the convergence as $K\rightarrow \infty$,  and $N_1, N_2\rightarrow \infty$. \qed

	\bibliographystyle{ACM-Reference-Format} 
	\bibliography{mybib}


\begin{thebibliography}{30}


\ifx \showCODEN    \undefined \def \showCODEN     #1{\unskip}     \fi
\ifx \showDOI      \undefined \def \showDOI       #1{#1}\fi
\ifx \showISBNx    \undefined \def \showISBNx     #1{\unskip}     \fi
\ifx \showISBNxiii \undefined \def \showISBNxiii  #1{\unskip}     \fi
\ifx \showISSN     \undefined \def \showISSN      #1{\unskip}     \fi
\ifx \showLCCN     \undefined \def \showLCCN      #1{\unskip}     \fi
\ifx \shownote     \undefined \def \shownote      #1{#1}          \fi
\ifx \showarticletitle \undefined \def \showarticletitle #1{#1}   \fi
\ifx \showURL      \undefined \def \showURL       {\relax}        \fi
\providecommand\bibfield[2]{#2}
\providecommand\bibinfo[2]{#2}
\providecommand\natexlab[1]{#1}
\providecommand\showeprint[2][]{arXiv:#2}

\bibitem[\protect\citeauthoryear{Anahtarc{\i}, Kar{\i}ks{\i}z, and
  Saldi}{Anahtarc{\i} et~al\mbox{.}}{2019}]%
        {2019fitted}
\bibfield{author}{\bibinfo{person}{Berkay Anahtarc{\i}},
  \bibinfo{person}{Can~Deha Kar{\i}ks{\i}z}, {and} \bibinfo{person}{Naci
  Saldi}.} \bibinfo{year}{2019}\natexlab{}.
\newblock \showarticletitle{Fitted Q-learning in mean-field games}.
\newblock \bibinfo{journal}{\emph{arXiv preprint arXiv:1912.13309}}
  (\bibinfo{year}{2019}).
\newblock


\bibitem[\protect\citeauthoryear{Benaim and Le~Boudec}{Benaim and
  Le~Boudec}{2008}]%
        {BENAIM2008823}
\bibfield{author}{\bibinfo{person}{Michel Benaim} {and}
  \bibinfo{person}{Jean-Yves Le~Boudec}.} \bibinfo{year}{2008}\natexlab{}.
\newblock \showarticletitle{A class of mean field interaction models for
  computer and communication systems}.
\newblock \bibinfo{journal}{\emph{Performance evaluation}}
  \bibinfo{volume}{65}, \bibinfo{number}{11-12} (\bibinfo{year}{2008}),
  \bibinfo{pages}{823--838}.
\newblock


\bibitem[\protect\citeauthoryear{Bu{\c{s}}oniu, Babu{\v{s}}ka, and
  De~Schutter}{Bu{\c{s}}oniu et~al\mbox{.}}{2010}]%
        {Bua}
\bibfield{author}{\bibinfo{person}{Lucian Bu{\c{s}}oniu},
  \bibinfo{person}{Robert Babu{\v{s}}ka}, {and} \bibinfo{person}{Bart
  De~Schutter}.} \bibinfo{year}{2010}\natexlab{}.
\newblock \bibinfo{booktitle}{\emph{Multi-agent Reinforcement Learning: An
  Overview}}.
\newblock \bibinfo{publisher}{Springer Berlin Heidelberg},
  \bibinfo{address}{Berlin, Heidelberg}, \bibinfo{pages}{183--221}.
\newblock
\showISBNx{978-3-642-14435-6}
\urldef\tempurl%
\url{https://doi.org/10.1007/978-3-642-14435-6_7}
\showDOI{\tempurl}


\bibitem[\protect\citeauthoryear{Carmona, Lauri{\`e}re, and Tan}{Carmona
  et~al\mbox{.}}{2019a}]%
        {carmona2019linearquadratic}
\bibfield{author}{\bibinfo{person}{Ren{\'e} Carmona}, \bibinfo{person}{Mathieu
  Lauri{\`e}re}, {and} \bibinfo{person}{Zongjun Tan}.}
  \bibinfo{year}{2019}\natexlab{a}.
\newblock \showarticletitle{Linear-quadratic mean-field reinforcement learning:
  convergence of policy gradient methods}.
\newblock \bibinfo{journal}{\emph{arXiv preprint arXiv:1910.04295}}
  (\bibinfo{year}{2019}).
\newblock


\bibitem[\protect\citeauthoryear{Carmona, Lauri{\`e}re, and Tan}{Carmona
  et~al\mbox{.}}{2019b}]%
        {carmona2019modelfree}
\bibfield{author}{\bibinfo{person}{Ren{\'e} Carmona}, \bibinfo{person}{Mathieu
  Lauri{\`e}re}, {and} \bibinfo{person}{Zongjun Tan}.}
  \bibinfo{year}{2019}\natexlab{b}.
\newblock \showarticletitle{Model-free mean-field reinforcement learning:
  mean-field MDP and mean-field Q-learning}.
\newblock \bibinfo{journal}{\emph{arXiv preprint arXiv:1910.12802}}
  (\bibinfo{year}{2019}).
\newblock


\bibitem[\protect\citeauthoryear{Fu, Yang, Chen, and Wang}{Fu
  et~al\mbox{.}}{2019}]%
        {fu2019actor}
\bibfield{author}{\bibinfo{person}{Zuyue Fu}, \bibinfo{person}{Zhuoran Yang},
  \bibinfo{person}{Yongxin Chen}, {and} \bibinfo{person}{Zhaoran Wang}.}
  \bibinfo{year}{2019}\natexlab{}.
\newblock \showarticletitle{Actor-critic provably finds Nash equilibria of
  linear-quadratic mean-field games}.
\newblock \bibinfo{journal}{\emph{arXiv preprint arXiv:1910.07498}}
  (\bibinfo{year}{2019}).
\newblock


\bibitem[\protect\citeauthoryear{Guo, Hu, Xu, and Zhang}{Guo
  et~al\mbox{.}}{2019}]%
        {guo1}
\bibfield{author}{\bibinfo{person}{Xin Guo}, \bibinfo{person}{Anran Hu},
  \bibinfo{person}{Renyuan Xu}, {and} \bibinfo{person}{Junzi Zhang}.}
  \bibinfo{year}{2019}\natexlab{}.
\newblock \showarticletitle{Learning mean-field games}. In
  \bibinfo{booktitle}{\emph{Advances in Neural Information Processing
  Systems}}. \bibinfo{pages}{4966--4976}.
\newblock


\bibitem[\protect\citeauthoryear{Guo, Hu, Xu, and Zhang}{Guo
  et~al\mbox{.}}{2020}]%
        {guo}
\bibfield{author}{\bibinfo{person}{Xin Guo}, \bibinfo{person}{Anran Hu},
  \bibinfo{person}{Renyuan Xu}, {and} \bibinfo{person}{Junzi Zhang}.}
  \bibinfo{year}{2020}\natexlab{}.
\newblock \showarticletitle{A General Framework for Learning Mean-Field Games}.
\newblock \bibinfo{journal}{\emph{arXiv preprint arXiv:2003.06069}}
  (\bibinfo{year}{2020}).
\newblock


\bibitem[\protect\citeauthoryear{Harris, Reny, and Robson}{Harris
  et~al\mbox{.}}{1995}]%
        {harris1995existence}
\bibfield{author}{\bibinfo{person}{Christopher Harris}, \bibinfo{person}{Philip
  Reny}, {and} \bibinfo{person}{Arthur Robson}.}
  \bibinfo{year}{1995}\natexlab{}.
\newblock \showarticletitle{The existence of subgame-perfect equilibrium in
  continuous games with almost perfect information: A case for public
  randomization}.
\newblock \bibinfo{journal}{\emph{Econometrica: Journal of the Econometric
  Society}} (\bibinfo{year}{1995}), \bibinfo{pages}{507--544}.
\newblock


\bibitem[\protect\citeauthoryear{Hernandez-Leal, Kartal, and
  Taylor}{Hernandez-Leal et~al\mbox{.}}{2019}]%
        {hernandez2019survey}
\bibfield{author}{\bibinfo{person}{Pablo Hernandez-Leal},
  \bibinfo{person}{Bilal Kartal}, {and} \bibinfo{person}{Matthew~E Taylor}.}
  \bibinfo{year}{2019}\natexlab{}.
\newblock \showarticletitle{A survey and critique of multiagent deep
  reinforcement learning}.
\newblock \bibinfo{journal}{\emph{Autonomous Agents and Multi-Agent Systems}}
  \bibinfo{volume}{33}, \bibinfo{number}{6} (\bibinfo{year}{2019}),
  \bibinfo{pages}{750--797}.
\newblock


\bibitem[\protect\citeauthoryear{Hu, Wellman, et~al\mbox{.}}{Hu
  et~al\mbox{.}}{1998}]%
        {hu}
\bibfield{author}{\bibinfo{person}{Junling Hu}, \bibinfo{person}{Michael~P
  Wellman}, {et~al\mbox{.}}} \bibinfo{year}{1998}\natexlab{}.
\newblock \showarticletitle{Multiagent reinforcement learning: theoretical
  framework and an algorithm.}. In \bibinfo{booktitle}{\emph{ICML}},
  Vol.~\bibinfo{volume}{98}. Citeseer, \bibinfo{pages}{242--250}.
\newblock


\bibitem[\protect\citeauthoryear{Huang, Malham{\'e}, Caines,
  et~al\mbox{.}}{Huang et~al\mbox{.}}{2006a}]%
        {huang2006}
\bibfield{author}{\bibinfo{person}{Minyi Huang}, \bibinfo{person}{Roland~P
  Malham{\'e}}, \bibinfo{person}{Peter~E Caines}, {et~al\mbox{.}}}
  \bibinfo{year}{2006}\natexlab{a}.
\newblock \showarticletitle{Large population stochastic dynamic games:
  closed-loop McKean-Vlasov systems and the Nash certainty equivalence
  principle}.
\newblock \bibinfo{journal}{\emph{Communications in Information \& Systems}}
  \bibinfo{volume}{6}, \bibinfo{number}{3} (\bibinfo{year}{2006}),
  \bibinfo{pages}{221--252}.
\newblock


\bibitem[\protect\citeauthoryear{Huang, Malham{\'e}, Caines,
  et~al\mbox{.}}{Huang et~al\mbox{.}}{2006b}]%
        {huang2006large}
\bibfield{author}{\bibinfo{person}{Minyi Huang}, \bibinfo{person}{Roland~P
  Malham{\'e}}, \bibinfo{person}{Peter~E Caines}, {et~al\mbox{.}}}
  \bibinfo{year}{2006}\natexlab{b}.
\newblock \showarticletitle{Large population stochastic dynamic games:
  closed-loop McKean-Vlasov systems and the Nash certainty equivalence
  principle}.
\newblock \bibinfo{journal}{\emph{Communications in Information \& Systems}}
  \bibinfo{volume}{6}, \bibinfo{number}{3} (\bibinfo{year}{2006}),
  \bibinfo{pages}{221--252}.
\newblock


\bibitem[\protect\citeauthoryear{Lasry and Lions}{Lasry and Lions}{2007}]%
        {lasry2007mean}
\bibfield{author}{\bibinfo{person}{Jean-Michel Lasry} {and}
  \bibinfo{person}{Pierre-Louis Lions}.} \bibinfo{year}{2007}\natexlab{}.
\newblock \showarticletitle{Mean field games}.
\newblock \bibinfo{journal}{\emph{Japanese journal of mathematics}}
  \bibinfo{volume}{2}, \bibinfo{number}{1} (\bibinfo{year}{2007}),
  \bibinfo{pages}{229--260}.
\newblock


\bibitem[\protect\citeauthoryear{Littman}{Littman}{1994}]%
        {littman}
\bibfield{author}{\bibinfo{person}{Michael~L. Littman}.}
  \bibinfo{year}{1994}\natexlab{}.
\newblock \showarticletitle{Markov games as a framework for multi-agent
  reinforcement learning}.
\newblock In \bibinfo{booktitle}{\emph{Machine Learning Proceedings 1994}},
  \bibfield{editor}{\bibinfo{person}{William~W. Cohen} {and}
  \bibinfo{person}{Haym Hirsh}} (Eds.). \bibinfo{publisher}{Morgan Kaufmann},
  \bibinfo{address}{San Francisco (CA)}, \bibinfo{pages}{157 -- 163}.
\newblock
\showISBNx{978-1-55860-335-6}
\urldef\tempurl%
\url{https://doi.org/10.1016/B978-1-55860-335-6.50027-1}
\showDOI{\tempurl}


\bibitem[\protect\citeauthoryear{L{\'o}pez}{L{\'o}pez}{2015}]%
        {lapez}
\bibfield{author}{\bibinfo{person}{Juan Pablo~Maldonado L{\'o}pez}.}
  \bibinfo{year}{2015}\natexlab{}.
\newblock \showarticletitle{Discrete time mean field games: The short-stage
  limit}.
\newblock \bibinfo{journal}{\emph{Journal of Dynamics \& Games}}
  \bibinfo{volume}{2}, \bibinfo{number}{1} (\bibinfo{year}{2015}),
  \bibinfo{pages}{89}.
\newblock


\bibitem[\protect\citeauthoryear{Mguni, Jennings, and de~Cote}{Mguni
  et~al\mbox{.}}{2018}]%
        {mguni2018decentralised}
\bibfield{author}{\bibinfo{person}{David Mguni}, \bibinfo{person}{Joel
  Jennings}, {and} \bibinfo{person}{Enrique~Munoz de Cote}.}
  \bibinfo{year}{2018}\natexlab{}.
\newblock \showarticletitle{Decentralised learning in systems with many, many
  strategic agents}.
\newblock \bibinfo{journal}{\emph{arXiv preprint arXiv:1803.05028}}
  (\bibinfo{year}{2018}).
\newblock


\bibitem[\protect\citeauthoryear{Mishra, Vasal, and Vishwanath}{Mishra
  et~al\mbox{.}}{2020}]%
        {mishra2020modelfree}
\bibfield{author}{\bibinfo{person}{Rajesh~K Mishra}, \bibinfo{person}{Deepanshu
  Vasal}, {and} \bibinfo{person}{Sriram Vishwanath}.}
  \bibinfo{year}{2020}\natexlab{}.
\newblock \bibinfo{title}{Model-free Reinforcement Learning for Non-stationary
  Mean Field Games}.
\newblock
\newblock
\showeprint[arxiv]{2004.02073}~[eess.SY]


\bibitem[\protect\citeauthoryear{Moore and Repullo}{Moore and Repullo}{1988}]%
        {moore1988subgame}
\bibfield{author}{\bibinfo{person}{John Moore} {and} \bibinfo{person}{Rafael
  Repullo}.} \bibinfo{year}{1988}\natexlab{}.
\newblock \showarticletitle{Subgame perfect implementation}.
\newblock \bibinfo{journal}{\emph{Econometrica: Journal of the Econometric
  Society}} (\bibinfo{year}{1988}), \bibinfo{pages}{1191--1220}.
\newblock


\bibitem[\protect\citeauthoryear{Perrin, Perolat, Lauri{\`e}re, Geist, Elie,
  and Pietquin}{Perrin et~al\mbox{.}}{2020}]%
        {perrin2020fictitious}
\bibfield{author}{\bibinfo{person}{Sarah Perrin}, \bibinfo{person}{Julien
  Perolat}, \bibinfo{person}{Mathieu Lauri{\`e}re}, \bibinfo{person}{Matthieu
  Geist}, \bibinfo{person}{Romuald Elie}, {and} \bibinfo{person}{Olivier
  Pietquin}.} \bibinfo{year}{2020}\natexlab{}.
\newblock \showarticletitle{Fictitious Play for Mean Field Games: Continuous
  Time Analysis and Applications}.
\newblock \bibinfo{journal}{\emph{arXiv preprint arXiv:2007.03458}}
  (\bibinfo{year}{2020}).
\newblock


\bibitem[\protect\citeauthoryear{Sabbadin and Viet}{Sabbadin and Viet}{2013}]%
        {10.5555/2891460.2891644}
\bibfield{author}{\bibinfo{person}{R\'{e}gis Sabbadin} {and}
  \bibinfo{person}{Anne-France Viet}.} \bibinfo{year}{2013}\natexlab{}.
\newblock \showarticletitle{A Tractable Leader-Follower MDP Model for Animal
  Disease Management}. In \bibinfo{booktitle}{\emph{Proceedings of the
  Twenty-Seventh AAAI Conference on Artificial Intelligence}} (Bellevue,
  Washington) \emph{(\bibinfo{series}{AAAI'13})}. \bibinfo{publisher}{AAAI
  Press}, \bibinfo{pages}{1320–1326}.
\newblock


\bibitem[\protect\citeauthoryear{Saldi, Basar, and Raginsky}{Saldi
  et~al\mbox{.}}{2018}]%
        {basar}
\bibfield{author}{\bibinfo{person}{Naci Saldi}, \bibinfo{person}{Tamer Basar},
  {and} \bibinfo{person}{Maxim Raginsky}.} \bibinfo{year}{2018}\natexlab{}.
\newblock \showarticletitle{Markov--Nash Equilibria in Mean-Field Games with
  Discounted Cost}.
\newblock \bibinfo{journal}{\emph{SIAM Journal on Control and Optimization}}
  \bibinfo{volume}{56}, \bibinfo{number}{6} (\bibinfo{year}{2018}),
  \bibinfo{pages}{4256--4287}.
\newblock


\bibitem[\protect\citeauthoryear{Sengupta and Kambhampati}{Sengupta and
  Kambhampati}{2020}]%
        {sengupta2020multiagent}
\bibfield{author}{\bibinfo{person}{Sailik Sengupta} {and}
  \bibinfo{person}{Subbarao Kambhampati}.} \bibinfo{year}{2020}\natexlab{}.
\newblock \bibinfo{title}{Multi-agent Reinforcement Learning in Bayesian
  Stackelberg Markov Games for Adaptive Moving Target Defense}.
\newblock
\newblock
\showeprint[arxiv]{2007.10457}~[cs.GT]


\bibitem[\protect\citeauthoryear{Subramanian and Mahajan}{Subramanian and
  Mahajan}{2019}]%
        {mahajan}
\bibfield{author}{\bibinfo{person}{Jayakumar Subramanian} {and}
  \bibinfo{person}{Aditya Mahajan}.} \bibinfo{year}{2019}\natexlab{}.
\newblock \showarticletitle{Reinforcement learning in stationary mean-field
  games}. In \bibinfo{booktitle}{\emph{Proceedings of the 18th International
  Conference on Autonomous Agents and MultiAgent Systems}}.
  \bibinfo{pages}{251--259}.
\newblock


\bibitem[\protect\citeauthoryear{Tan}{Tan}{1993}]%
        {tan}
\bibfield{author}{\bibinfo{person}{Ming Tan}.} \bibinfo{year}{1993}\natexlab{}.
\newblock \showarticletitle{Multi-agent reinforcement learning: Independent vs.
  cooperative agents}. In \bibinfo{booktitle}{\emph{Proceedings of the tenth
  international conference on machine learning}}. \bibinfo{pages}{330--337}.
\newblock


\bibitem[\protect\citeauthoryear{{Tharakunnel} and
  {Bhattacharyya}}{{Tharakunnel} and {Bhattacharyya}}{2007}]%
        {4220822}
\bibfield{author}{\bibinfo{person}{K. {Tharakunnel}} {and} \bibinfo{person}{S.
  {Bhattacharyya}}.} \bibinfo{year}{2007}\natexlab{}.
\newblock \showarticletitle{Leader-Follower semi-Markov Decision Problems:
  Theoretical Framework and Approximate Solution}. In
  \bibinfo{booktitle}{\emph{2007 IEEE International Symposium on Approximate
  Dynamic Programming and Reinforcement Learning}}. \bibinfo{pages}{111--118}.
\newblock


\bibitem[\protect\citeauthoryear{Weintraub, Benkard, and Van~Roy}{Weintraub
  et~al\mbox{.}}{2006}]%
        {weintraub2006oblivious}
\bibfield{author}{\bibinfo{person}{Gabriel~Y Weintraub},
  \bibinfo{person}{Lanier Benkard}, {and} \bibinfo{person}{Benjamin Van~Roy}.}
  \bibinfo{year}{2006}\natexlab{}.
\newblock \showarticletitle{Oblivious equilibrium: A mean field approximation
  for large-scale dynamic games}. In \bibinfo{booktitle}{\emph{Advances in
  neural information processing systems}}. \bibinfo{pages}{1489--1496}.
\newblock


\bibitem[\protect\citeauthoryear{Yang, Ye, Trivedi, Xu, and Zha}{Yang
  et~al\mbox{.}}{2017}]%
        {yang2017learning}
\bibfield{author}{\bibinfo{person}{Jiachen Yang}, \bibinfo{person}{Xiaojing
  Ye}, \bibinfo{person}{Rakshit Trivedi}, \bibinfo{person}{Huan Xu}, {and}
  \bibinfo{person}{Hongyuan Zha}.} \bibinfo{year}{2017}\natexlab{}.
\newblock \bibinfo{title}{Learning Deep Mean Field Games for Modeling Large
  Population Behavior}.
\newblock
\newblock
\showeprint[arxiv]{1711.03156}~[cs.LG]


\bibitem[\protect\citeauthoryear{Yang, Luo, Li, Zhou, Zhang, and Wang}{Yang
  et~al\mbox{.}}{2018}]%
        {yang}
\bibfield{author}{\bibinfo{person}{Yaodong Yang}, \bibinfo{person}{Rui Luo},
  \bibinfo{person}{Minne Li}, \bibinfo{person}{Ming Zhou},
  \bibinfo{person}{Weinan Zhang}, {and} \bibinfo{person}{Jun Wang}.}
  \bibinfo{year}{2018}\natexlab{}.
\newblock \showarticletitle{Mean Field Multi-Agent Reinforcement Learning}. In
  \bibinfo{booktitle}{\emph{International Conference on Machine Learning}}.
  \bibinfo{pages}{5571--5580}.
\newblock


\bibitem[\protect\citeauthoryear{Zhang, Koppel, Zhu, and Ba{\c{s}}ar}{Zhang
  et~al\mbox{.}}{2019}]%
        {koppel}
\bibfield{author}{\bibinfo{person}{Kaiqing Zhang}, \bibinfo{person}{Alec
  Koppel}, \bibinfo{person}{Hao Zhu}, {and} \bibinfo{person}{Tamer
  Ba{\c{s}}ar}.} \bibinfo{year}{2019}\natexlab{}.
\newblock \showarticletitle{Global convergence of policy gradient methods to
  (almost) locally optimal policies}.
\newblock \bibinfo{journal}{\emph{arXiv preprint arXiv:1906.08383}}
  (\bibinfo{year}{2019}).
\newblock


\end{thebibliography}

\end{document}